	\providecommand\BibTeX{{%
			\normalfont B\kern-0.5em{\scshape i\kern-0.25em b}\kern-0.8em\TeX}}}
\newcommand{\mq}[1]{\text{'#1'}}
\tikzset{
	>=stealth',
	help lines/.style={dashed, thick},
	axis/.style={<->},
	important line/.style={thick},
	connection/.style={thick, dotted},
	vertex/.style = {circle, fill = black, outer sep = 1pt, inner sep = 0.5pt}
}
\tikzstyle{newnode} = [rectangle, rounded corners, text centered, draw=black, align=center]
\tikzstyle{arrow} = [thick,->,>=stealth]
\tikzstyle{level 1}=[level distance=3.5cm, sibling distance=3.5cm]
\tikzstyle{level 2}=[level distance=3.5cm, sibling distance=2cm]
\tikzstyle{bag} = [text width=2em, text centered]
\tikzstyle{end} = [circle, minimum width=3pt,fill, inner sep=0pt]
\tikzset{every picture/.style={remember picture}}
\pgfplotsset{table/search path={data}}
\newcolumntype{a}{>{\columncolor[gray]{0.8}}c}
\newcolumntype{m}{>{\columncolor[gray]{0.5}}c}
\newcommand*\ellipsed[1]{
	\tikz[baseline=(char.base)]{\node[shape=ellipse,draw,inner sep=.3pt] (char) {#1};}}
\newcommand*\rectangled[1]{
	\tikz[baseline=(char.base)]{\node[shape=rectangle,draw,inner sep=1pt] (char) {#1};}}
\def\mconc{\,{\scriptstyle\Box}\,}
\def\tra{\mathtt{tra}}
\def\mmu{\mathtt{mmu}}
\def\emu{\mathtt{emu}}
\def\opd{\mathtt{opd}}
\def\cpd{\mathtt{cpd}}
\def\qqr{\mathtt{qqr}}
\def\rqr{\mathtt{rqr}}
\def\inv{\mathtt{inv}}
\def\add{\mathtt{add}}
\def\sub{\mathtt{sub}}
\def\rnk{\mathtt{rnk}}
\def\det{\mathtt{det}}
\def\sol{\mathtt{sol}}
\def\evc{\mathtt{evc}}
\def\evl{\mathtt{evl}}
\def\chf{\mathtt{chf}}
\def\dsd{\mathtt{dsv}}
\def\usd{\mathtt{usv}}
\def\vsd{\mathtt{vsv}}
\def\op{\mathtt{op}}
\def\invMA{\mathtt{INV}}
\def\addMA{\mathtt{ADD}}
\def\subMA{\mathtt{SUB}}
\def\traMA{\mathtt{TRA}}
\def\mmuMA{\mathtt{MMU}}
\def\emuMA{\mathtt{EMU}}
\def\opdMA{\mathtt{OPD}}
\def\cpdMA{\mathtt{CPD}}
\def\qqrMA{\mathtt{QQR}}
\def\rqrMA{\mathtt{RQR}}
\def\rnkMA{\mathtt{RNK}}
\def\detMA{\mathtt{DET}}
\def\solMA{\mathtt{SOL}}
\def\traMA{\mathtt{TRA}}
\def\evcMA{\mathtt{EVC}}
\def\evlMA{\mathtt{EVL}}
\def\dsdMA{\mathtt{DSV}}
\def\usdMA{\mathtt{USV}}
\def\vsdMA{\mathtt{VSV}}
\def\chfMA{\mathtt{CHF}}
\def\opMA{\mathtt{OP}}
\newcommand{\opr}[3]{\hspace{-0pt}\tensor[_{#3}]{\mathtt{#1}}{_{\hspace{-0.5pt}#2}}}
\newcommand{\oprc}[4]{\hspace{-0pt}\tensor*[_{#3}]{\mathtt{#1}}{_{#2}}}
\newcommand{\lstuppercase}
{\uppercase\expandafter{\expandafter\lst@token\expandafter{\the\lst@token}}}
\newcommand{\lstlowercase}
{\lowercase\expandafter{\expandafter\lst@token\expandafter{\the\lst@token}}}
\lstdefinestyle{SQLstyle}{%
	language=SQL,%
	basicstyle=\ttfamily\color{black}\fontsize{8}{9}\selectfont,%
	deletekeywords={ON, BY},
	otherkeywords={with, WITH},%
	morekeywords={with, WITH},%
	morekeywords=[2]{mmu, inv, cpd},
	morekeywords=[3]{on},
	morekeywords=[4]{by},
	keywordstyle=\ttfamily\lstuppercase\bfseries\color{black},%
	keywordstyle=[2]\ttfamily\lstuppercase\bfseries\color{black!66!green},
	keywordstyle=[3]\ttfamily\lstuppercase\bfseries\color{black!66!green!80!white},
	keywordstyle=[4]\ttfamily\lstuppercase\bfseries\color{black!66!green!60!white},
	identifierstyle=\ttfamily\color{black},%
	stringstyle=\ttfamily\color{black},%
	showstringspaces=false,%
	mathescape=true,%
	upquote=true,%
}
\lstdefinestyle{SQLstyleReply}{%
	language=SQL,%
	basicstyle=\ttfamily\color{darkblue}\fontsize{8}{9}\selectfont,%
	deletekeywords={ON, BY},
	otherkeywords={with, WITH},%
	morekeywords={with, WITH},%
	morekeywords=[2]{mmu, inv, cpd},
	morekeywords=[3]{on},
	morekeywords=[4]{by},
	identifierstyle=\ttfamily\color{darkblue},%
	stringstyle=\ttfamily\color{darkblue},%
	showstringspaces=false,%
	mathescape=true,%
	upquote=true,%
}
\def\schemacast{\rotatebox[origin=c]{0}{$\Delta$}}
\def\columncast{\rotatebox[origin=c]{0}{$\triangledown$}}
\begin{document}
	\fancyhead{}

	\title{A Relational Matrix Algebra and its Implementation in a Column
		Store}
	
	\author{Oksana Dolmatova}
	\affiliation{
		\institution{University of Zürich}
		\city{Zürich}
		\country{Switzerland}
	}
	\email{dolmatova@ifi.uzh.ch}
	
	\author{Nikolaus Augsten}
	\affiliation{
		\institution{University of Salzburg}
		\city{Salzburg}
		\country{Austria}
	}
	\email{nikolaus.augsten@sbg.ac.at}
	
	\author{Michael H.\ Böhlen}
	\affiliation{
		\institution{University of Zürich}
		\city{Zürich}
		\country{Switzerland}
	}
	\email{boehlen@ifi.uzh.ch}
	
	
	\newmdenv[
	linewidth=2pt,
	roundcorner=5pt,
	leftmargin = 40,
	rightmargin = 40,
	outerlinecolor = blue!70!black,
	skipabove=10pt,
	skipbelow=2pt,
	]{reply}
	
	\definecolor{lightBlue}{RGB}{13,104,169}
	\definecolor{darkblue}{RGB}{0,102,204}
	\definecolor{dkgreen}{rgb}{0,0.6,0}
	
	\newcommand{\rev}[1]{{\color{black}#1}}
	
	\newcommand{\finrev}[1]{{\color{black}#1}}
	
	
	\begin{abstract}
		\rev{Analytical queries often require a mixture of
			relational and linear algebra operations applied to the same
			data. This poses a challenge to analytic systems that must bridge
			the gap between relations and matrices.  Previous work has mainly
			strived to fix the problem at the implementation level. This paper
			proposes a principled solution at the logical level. We introduce
			the \emph{relational matrix algebra} (RMA), which seamlessly
			integrates linear algebra operations into the relational model and
			eliminates the dichotomy between matrices and relations.
			RMA is closed: All our relational matrix operations are performed
			on relations and result in relations; no additional data structure
			is required.}
		Our implementation in MonetDB shows the feasibility of our approach,
		and empirical evaluations suggest that in-database analytics
		performs well for mixed workloads.
	\end{abstract}
	
	\maketitle
	
	\section{Introduction}
	\label{sec:introduction}
	
	Many data that are stored in relational databases include numerical
	parts that must be analyzed, for example, sensor data from industrial
	plants, scientific observations, or point of sales data.  \rev{The
		analysis of these data, which are not purely numerical but also
		include important non-numerical values, demand mixed queries that
		apply relational and linear algebra operations on the same data.}
	
	\rev{Dealing with mixed workloads is challenging since the gap between
		relations and matrices must be bridged.  Current relational systems
		are poorly equipped for this task.  Previous attempts to deal with
		mixed workloads have focused on the implementation level, for
		example, by introducing ordered data types; by storing matrices in
		special relations or key-value structures; or by splitting queries
		into their relational and matrix parts.  \finrev{This paper resolves
			the gap between relations and matrices.}
		
		We propose a principled solution for mixed workloads and introduce
		the \emph{relational matrix algebra} (RMA) to support complex data
		analysis within the relational model. The goal is to (1) solve the
		integration of relations and linear algebra at the logical level,
		(2) so achieve independence from the implementation at the physical
		level, and (3) prove the feasibility of our model by extending an
		existing system.  We are the first to achieve these goals: Other
		works focus on facilitating the transition between the relational
		and the linear algebra model. We eliminate the dichotomy between
		matrices and relations by seamlessly integrating linear algebra into
		the relational model. Our implementation of RMA in MonetDB shows the
		feasibility of our approach.
		
		We define linear operations over relations and systematically
		process and maintain non-numerical information.  We show that the
		relational model is well-suited for complex data analysis if
		ordering and contextual information are dealt with properly.}  RMA
	is purely based on relations and does not introduce any ordered data
	structures. Instead, the relevant row order for matrix operations is
	computed from \emph{contextual information} in the argument relations.
	All relational matrix operations return relations with
	\emph{origins}. Origins are constructed from the contextual
	information (attribute names and non-numerical values) of the input
	relations and uniquely identify and describe each cell in the result
	relation.
	
	\rev{We extend the syntax of SQL to support relational matrix
		operations.  As an example, consider a relation $rating$ with schema
		$(User, Balto, Heat, Net)$, that stores users and their ratings for
		the three films ("Balto", "Heat", and "Net", one column per
		film). The SQL query}
	\begin{SQL}
  SELECT * FROM inv(rating BY User);
	\end{SQL}
	\rev{orders the relation by users and computes the inversion of the
		matrix formed by the values of the ordered numerical columns. The
		result is a relation with the same schema: The values of attribute
		$User$ are preserved, and the values of the remaining three
		attributes are provided by matrix inversion (see
		Section~\ref{sec:RMA In Action} for details). The origin of a
		numerical result value is given by the user name in its table row
		and the attribute name of its column.}
	
	At the system level, we have integrated our solution into
	\mbox{MonetDB}.  Specifically, we extended the kernel with relational
	matrix operations implemented over binary association tables
	(BATs). The physical implementation of matrix operations is flexible
	and may be transparently delegated to specialized libraries that
	leverage the underlying hardware (e.g., MKL~\cite{mkl} for CPUs or
	cuBLAS~\cite{cuBLAS} for GPUs).  The new functionality is introduced
	without changing the main data structures and the processing pipeline
	of \mbox{MonetDB}, and without affecting existing functionality.
	
	Our technical contributions are as follows:
	\begin{itemize}
		\item We propose the \emph{relational matrix algebra} (RMA), which
		extends the relational model with matrix operations. This is the
		first approach \rev{to show that the relational model is sufficient
			to support matrix operations. The new set of operations is
			\emph{closed}: All relational matrix operations are performed on
			relations and result in relations, and no additional data
			structure is required.}
		\item We show that matrix operations are \emph{shape restricted},
		which allows us to \rev{systematically define the results of matrix
			operations over relations}. We define \emph{row and column
			origins}, the part of contextual information that \rev{describes
			values in the result relation}, and prove that all our operations
		return relations with origins.
		\item We implement and evaluate our solution in detail.  We show that
		our solution is feasible and leverages existing data structures and
		optimizations.
	\end{itemize}
	
	RMA opens new opportunities for advanced data analytics that combine
	relational and linear algebra functionality, speeds up analytical
	queries, and triggers the development of new logical and physical
	optimization techniques.
	
	The paper is organized as follows. Sect.~\ref{sec:Related Work}
	discusses related work.  We introduce basics in
	Sect.~\ref{sec:Background} and introduce relational matrix algebra
	(RMA) in Sect.~\ref{sec:reduct-line-oper}. \rev{We show an application
		example in Sect.~\ref{sec:RMA In Action}}, discuss important
	properties of RMA in Sect.~\ref{sec:Properties_RMA}, and its
	implementation in MonetDB in Sect.~\ref{sec:Implementation}. We
	evaluate our solution in Sect.~\ref{sec:Performance Evaluation} and
	conclude in Sect.~\ref{sec:Summary}.

	\section{Related Work}
	\label{sec:Related Work}
	
	Relational DBMSs offer simple linear algebra operations, such as the
	pair-wise addition of attribute values in a relation.  Some
	operations, e.g., matrix\footnote{Some approaches support
		multi-dimensional arrays. Since we target linear algebra, we focus
		on two dimensions and use the term \emph{matrix} throughout.}
	multiplication, can be expressed via syntactically complex and slow
	SQL queries.  The set of operations is limited and does not include
	operations whose results depend on the row order.  For instance, there
	are no SQL solutions for inversion or determinant computation.
	Complex operations must be programmed as UDFs.  Ordonez et al.\
	\cite{Ordonez} suggest UDFs for linear regression with a matrix-like
	result type.  The UTL\_NLA package~\cite{oracle_utl_nla} for
	\mbox{Oracle} DBMS offers linear algebra operations defined over
	UTL\_NLA\_ARRAY. UDFs provide a technical interface but do not define
	matrix operations over relations. No systematic approach to maintain
	contextual information is provided.
	
	Luo et al.~\cite{SimSQL_ext} extend \mbox{SimSQL} \cite{SimSQL}, a
	Hadoop-based relational system, with linear algebra functionality.
	RasDaMan \cite{intro_rasdaman, RasdamanAA} manages and processes image
	raster data.  Both systems introduce \emph{matrices} as ordered
	numeric-only attribute types.  Although relations and matrices
	coexist, operations are defined over different objects. Linear
	operations are not defined over unordered objects and they do not
	support contextual information for individual cells of a matrix.
	
	SciQL \cite{SciQL_introduction,SciQL_array_data_proc} extends MonetDB
	\cite{MonetDB} with a new data type, \mbox{\tt ARRAY}, as a
	first-class object.  An array is stored as an object on its own.
	Arrays have a fixed schema: The last attribute stores the data values
	of a matrix, all other attributes are dimension attributes and must be
	numeric. Arrays come with a limited set of operations, such as
	addition, filtering, and aggregation, and they must be converted to
	relations to perform relational operations.  The presence of
	contextual information and its inheritance are not addressed.
	
	The MADlib library \cite{MADlib} for in-database analytics offers a
	broad range of linear and statistical operations, defined as either
	UDFs with C++ implementations or Eigen library calls.  Matrix
	operations require a specific input format: Tables must have one
	attribute with a row id value and another array-valued attribute for
	matrix rows.  Matrix operations return purely numeric results and
	cannot be nested.
	
	Hutchison et al.\ \cite{LARADB} propose LARA, an algebra with
	tuple-wise operations, attribute-wise operations, and tuple
	extensions. LARA defines linear and relational algebra operations
	using the same set of primitives. This is a good basis for
	inter-algebra optimizations that span linear and relational
	operations.  LARA offers a strong theoretical basis, works out
	properties of the solution, and allows to store row and column
	descriptions during the operations. The maintenance of contextual
	information is not considered for operations that change the number of
	rows or columns.
	
	LevelHeaded \cite{LevelHeaded, EmptyHeaded}, an engine for relational
	and linear algebra operations, uses a special key-value structure:
	Each object has keys (dimension attributes) and annotations (value
	attributes).  Dimension and value attributes are stored in a trie and
	a flat columnar buffer, respectively.  Linear operations are available
	through an extended SQL syntax.  Key values guarantee contextual
	information for rows. However, the trie key structure restricts
	relational operations: For example, aggregations of keys and join
	predicates over non-key attributes (i.e., subselects in SQL) are not
	allowed.
	
	SciDB \cite{architecture_SciDB} is a DBMS that is based on
	arrays. Matrices and relations are implemented as nested arrays.
	SciDB focuses on efficient array processing and performs linear
	algebra operations over arrays. SciDB supports element-wise operations
	and selected linear operations, such as SVD. The system also offers
	relational algebra operations on arrays but cannot compete with
	relational DBMSs such as MonetDB in terms of performance.  A
	systematic approach to maintain contextual information is not
	considered.
	
	Statistical packages, such as R \cite{r_project} and pandas
	\cite{Pandas}, offer a broad range of linear and relational algebra
	operations over arrays.  Each cell may be associated with descriptive
	information, but this information is not always inherited as part of
	operations (e.g., $\usd$).  No systematic solution for associating
	contextual information with numeric results is provided. The most
	important relational operations are supported, but even basic
	optimizations (e.g., join ordering) are missing.
	
	The R package RIOT-DB~\cite{RIOTDB} uses MySQL as a backend and
	translates linear computations to SQL.  RIOT-DB addresses the main
	memory limitations of R, and the optimization of SQL statements yields
	inter-operation optimization.  However, it is difficult (or sometimes
	impossible) to express linear algebra operations in SQL, and only a
	few simple operations, such as subtraction and multiplication, are
	discussed.
	
	AIDA \cite{AIDA} integrates MonetDB and NumPy \cite{NumPy} and
	exploits the fact that both systems use C arrays as an internal data
	structure: To avoid copying NumPy data to MonetDB, AIDA passes
	pointers to arrays. Data copying is still needed to pass MonetDB
	results to NumPy since MonetDB does not guarantee that multiple
	columns are contiguous in memory, which is required by NumPy.  AIDA
	offers a Python-like procedural language for relational and linear
	operations.  Sequences of relational operations are evaluated lazily,
	which allows AIDA to combine and optimize sequences of relational
	operations.  The optimization does not include linear algebra
	operations.
	
	SystemML \cite{SystemMLMP} offers a set of linear algebra primitives
	that are expressed in a high-level, declarative language and are
	implemented on MapReduce.  SystemML includes linear algebra
	optimizations that are similar to relational optimizations (e.g.,
	selecting the order of execution of matrix multiplications).  The
	system considers only linear algebra operations.

	\section{Preliminaries}
	\label{sec:Background}
	
	This section presents notation for relations and matrices, and
	introduces the basic matrix algebra operations.
	
	\finrev{\subsection{Relations}}
	
	A relation $r$ is a set of tuples $r_i$ with schema $\mathcal{R}$.  A
	schema, $\mathcal{R} = (A, B, \ldots)$, is a finite, ordered set of
	attribute names.  A tuple $r_i\in r$ has a value from the appropriate
	domain for each attribute in the schema. We write $r_i.A$ to denote
	the value of attribute $A$ in tuple $r_i$ and $r.A$ to denote the set
	of all values $r_i.A$ in relation $r$. Ordered subsets of a schema,
	$\mathbf{U}\subseteq \mathcal{R}$, are typeset in bold.  $|r|$ is the
	number of tuples in relation $r$.
	
	Let $r$ be a relation and $\mathbf{U} \subseteq \mathcal{R}$ be
	attributes that form a key of $\mathcal{R}$.  We write
	$r^{\mathbf{U},k}$ to denote the \emph{$k\textsuperscript{th}$ tuple}
	of relation $r$ sorted by the values of attributes $\mathbf{U}$ (in
	ascending order):
	\begin{align} \label{eq:8}
	\begin{split}
	r_i = r^{\mathbf{U},k}  \iff\  
	& r_i \in r\ \land \\
	& |\{r_j \mid r_j \in r \land r_j.\mathbf{U} < r_i.\mathbf{U}\}| = k-1
	\end{split}
	\end{align}
	
	The \emph{column cast} $\columncast U$ creates an ordered
	set $L$ from the sorted values of an attribute $U$ that
	forms a key in relation $r$:
	\begin{align}\label{eq:9}
	\begin{split}
	L =  \columncast U  \iff \  
	& |L| = |r|\ \land \\
	& \forall 1 \leq i \leq |r| (L[i] = r^{U,i}.U)
	\end{split}
	\end{align}
	The column cast is used to generate a schema from a set of values.  We
	use this for operations $\tra$, $\usd$, and $\opd$ (see Table
	\ref{tbl:RMADef}).  The column cast is applicable if the cardinality
	of a list of attributes $\mathbf{U}$ is one.
	
	\begin{example}
		Consider relation $r$ in
		Figure~\ref{fig:relation-matrix-preliminaries}.  The third tuple of
		relation $r$ sorted by the values of attribute $V$ is
		$r^{(V),3} = (A,30,1)$, the column cast of $O$ is
		$\columncast O = (A,B,C)$, and the values of attribute
		$W$ are $r.W = \{1,5,1\}$.  \vspace*{-0pt}
		
		\begin{figure}[t] \centering {\fontsize{8}{9}\selectfont
				\begin{tabular} {|c|c|c|}
					\multicolumn{3}{l}{$r$} \\ \hline
					{\bf O} & {\bf V} & {\bf W} \\ \hline
					A  & 30 & 1 \\	\hline
					C  & 22 & 5 \\ \hline
					B  & 10  & 1 \\ \hline
				\end{tabular}
				\quad\quad
				\begin{tabular} {|m|c|}
					\multicolumn{2}{l}{$d$} \\ \hline
					\rowcolor{gray}
					{\bf} & {\bf 1} \\ \hline
					1 & D \\ \hline
					2 & B \\ \hline
				\end{tabular}
				\quad
				\begin{tabular} {|m|c|c|}
					\multicolumn{3}{l}{$e$} \\ \hline
					\rowcolor{gray}
					{\bf} & {\bf 1} & {\bf 2} \\ \hline
					1     & 1       & 3       \\	\hline
					2     & 2       & 4       \\ \hline
				\end{tabular}
				\quad
				\begin{tabular} {|m|c|c|c|}
					\multicolumn{4}{l}{ $d \mconc e$} \\ \hline
					\rowcolor{gray}
					{\bf} & {\bf 1} & {\bf 2} & {\bf 3} \\ \hline
					1     & D       & 1       & 3       \\	\hline
					2     & B       & 2       & 4       \\ \hline
			\end{tabular}}
			\caption{Relation $r$; matrices $d, e$, and $d \mconc e$}
			\label{fig:relation-matrix-preliminaries}
		\end{figure}
	\end{example}
	
	\vspace*{-0pt} We use set notation and apply it to bags.  Bags can be
	ordered or unordered.  To emphasize the difference, parentheses are
	used for ordered bags (or lists), e.g., (3,2,3), and curly braces 
	for unordered bags, e.g., \{3,2,3\}.  When
	transitioning from unordered to ordered bags, the order is specified
	explicitly.
	
	\finrev{\subsection{Matrices}}
	
	An $n \times k$ matrix $m$ is a two-dimensional array with $n$ rows
	and $k$ columns.  $|m|$ is the number of rows, $\#m$ the number of
	columns.  The element in the $i\textsuperscript{th}$ row and the
	$j\textsuperscript{th}$ column of matrix $m$ is $m[i,j]$; the
	$i\textsuperscript{th}$ row is $m[i,*]$; the $j\textsuperscript{th}$
	column is $m[*,j]$.
	
	We consider the operations from the R Matrix Algebra \cite{R_MA}:
	element-wise multiplication ($\emuMA$), matrix multiplication
	($\mmuMA$), outer product ($\opdMA$), cross product ($\cpdMA$), matrix
	addition ($\addMA$), matrix subtraction ($\subMA$), transpose
	($\traMA$), solve equation ($\solMA$), inversion ($\invMA$),
	eigenvectors ($\evcMA$), eigenvalues ($\evlMA$), QR decomposition
	($\qqrMA$, $\rqrMA$), SVD -- single value decomposition ($\dsdMA$,
	$\usdMA$, $\vsdMA$), determinant ($\detMA$), rank ($\rnkMA$), and
	Choleski factorization ($\chfMA$).  Note that QR and SVD return more
	than one matrix, therefore we split the operations: $\qqrMA$ and
	$\rqrMA$ return matrix Q and matrix R of the QR decomposition,
	respectively; $\dsdMA$, $\usdMA$, and $\vsdMA$ return vector $D$ with
	the singular values, matrix $U$ with the left singular vectors, and
	matrix $V$ with the right singular vectors of SVD, respectively.
	
	The \emph{matrix concatenation} of matrices $m$ and $n$ with $k$ rows
	each returns a matrix $h$ with $k$ rows.  The $i\textsuperscript{th}$
	row of $h$ is the concatenation of the $i\textsuperscript{th}$ row of
	$m$ and the $i\textsuperscript{th}$ row of $n$.
	\begin{align}\label{eq:10}
	\begin{split}
	h = m \mconc n 
	& \iff |h| = |m|\ \land \\
	& \forall 1 \leq i \leq |h| ( h[i,*] = m[i, *] \circ n[i, *])
	\end{split}
	\end{align}
	
	The \emph{schema cast} $\schemacast \mathbf{U}$ of attributes
	$\mathbf{U}$ creates a matrix $m$ (with a single column) from the
	attribute names of $\mathbf{U}$:
	\begin{align}
	\begin{split}
	m = \schemacast \mathbf{U}
	\iff\ 
	& \#m=1 \land  |m| = |\mathbf{U}|\ \land \\
	& \forall 1 \leq i \leq |\mathbf{U}| ( m[i,1] = \mathbf{U}[i] )
	\end{split}
	\end{align}
	
	\begin{example}
		Consider attributes $\mathbf{U} = (D, B)$.  Matrix $d$ in
		Figure~\ref{fig:relation-matrix-preliminaries} is the result of the
		schema cast $d = \schemacast\mathbf{U}$.  The result of
		concatenating matrix $d$ and matrix $e$ is $d \mconc e$. Note that
		the row and column numbers (cells shaded in gray) in the matrix
		illustrations are not part of the matrix.
	\end{example}
	
	Matrix operations are \emph{shape restricted}, i.e., the number of
	result rows is equal to the number of \emph{rows} of one of the input
	matrices (r), the number of \emph{columns} of one of the input
	matrices (c), or \emph{one} (1). The same holds for the number of
	result columns. 
	
	The dimensionality of result matrices defines the \emph{shape type} of
	matrix operations.  We write r$_1$ if the result dimensionality is
	equal to the number of rows in the first matrix, r$_2$ if the result
	dimensionality is equal to the number of rows in the second matrix,
	and r$_*$ if the result dimensionality is equal to the number of rows
	in the first and second matrix (i.e., r$_1$ = r$_2$).  The same
	notation holds for the number of columns.  Table
	\ref{tbl:card-matrix-ops} summarizes the shape types of matrix
	operations.
	
	\begin{table}[htbp] \centering
		\caption{Shape types of matrix operations}
		\label{tbl:card-matrix-ops}
		{\fontsize{8}{9}\selectfont
			\begin{tabular}{|l|c|c|} \hline \textbf{Cardinalities} &
				\textbf{Shape type} & \textbf{Operations} \\ \hline
				$|i_1 \times j_1| \rightarrow |i_1 \times i_1|$ & (r$_1$,r$_1$) & $\usdMA$  \\
				$|i_1 \times j_1|,|i_2 \times j_1| \rightarrow |i_1 \times i_2|$ & (r$_1$,r$_2$) & $\opdMA$   \\
				$|i_1 \times i_1| \rightarrow |i_1 \times i_1|$ & (r$_1$,c$_1$) & $\invMA$, $\evcMA$, $\chfMA$  \\
				$|i_1 \times j_1| \rightarrow |i_1 \times j_1|$ & (r$_1$,c$_1$) & $\qqrMA$  \\
				$|i_1 \times j_1|, |j_1 \times j_2| \rightarrow |i_1 \times j_2|$ & (r$_1$,c$_2$) &	$\mmuMA$ \\
				$|i_1 \times i_1| \rightarrow |i_1 \times 1|$ & (r$_1$,1) & $\evlMA$  \\
				$|i_1 \times j_1| \rightarrow |i_1 \times 1|$ & (r$_1$,1) & $\vsdMA$  \\
				$|i_1 \times j_1| \rightarrow |j_1 \times i_1|$ & (c$_1$,r$_1$) & $\traMA$  \\
				$|i_1 \times j_1| \rightarrow |j_1 \times j_1|$ & (c$_1$,c$_1$) & $\rqrMA$ , $\dsdMA$  \\
				$|i_1 \times j_1|,|i_1 \times j_2| \rightarrow |j_1 \times j_2|$ & (c$_1$,c$_2$) & $\cpdMA$  \\
				$|i_1 \times j_1|,|i_1 \times 1| \rightarrow |j_1 \times 1|$ & (c$_1$,c$_2$) & $\solMA$  \\
				$|i_1 \times j_1|, |i_1 \times j_1| \rightarrow |i_1 \times j_1|$ & (r$_*$,c$_*$) & $\emuMA$, $\addMA$, $\subMA$ \\
				$|i_1 \times i_1| \rightarrow |1 \times 1|$ & (1,1) & $\detMA$  \\
				$|i_1 \times j_1| \rightarrow |1 \times 1|$ & (1,1) & $\rnkMA$
				\\ \hline
		\end{tabular}}
	\end{table}	
	
	\begin{example}
		Matrix multiplication has shape type (r$_1$,c$_2$), which states
		that the number of result rows is equal to the number of rows of the
		first argument matrix, and the number of columns is equal to the
		number of columns of the second argument matrix.  Matrix addition
		has shape type (r$_*$,c$_*$), which states that the number of result
		rows is equal to the number of rows of the first matrix and the
		number of rows of the second matrix.
	\end{example}

	\rev{ All operations of the matrix algebra are \emph{shape
			restricted}.  This follows directly from the definitions of the
		matrix operations \cite{matrix_comp}.  The first column of
		Table~\ref{tbl:card-matrix-ops} lists the relevant cardinalities
		from these definitions.  We use shape restriction to determine the
		inheritance of contextual information.  It has also been used in
		size propagation techniques \cite{Boehm} for the purpose of
		cost-based optimization of chains of matrix operations.  }
	
	\section{Relational Matrix Algebra}
	\label{sec:reduct-line-oper}
	
	To seamlessly integrate matrix operations into the relational model,
	we extend the relational algebra to the \emph{relational matrix
		algebra} (RMA).  For each of the matrix operations we define a
	corresponding relational matrix operation in RMA: $\emu$, $\mmu$,
	$\opd$, $\cpd$, $\add$, $\sub$, $\tra$, $\sol$, $\inv$, $\evc$,
	$\evl$, $\qqr$, $\rqr$, $\dsd$, $\usd$, $\vsd$, $\det$, $\rnk$,
	$\chf$.  We use upper case for matrix operations (e.g., $\traMA$) and
	lower case for RMA operations (e.g., $\tra$).  RMA includes both relational
	algebra and relational matrix operations.  The new
	operations behave like regular operations with relations as input and
	output.
	
	For each argument relation, $r$, of a relational matrix operation
	\rev{one} parameter must be specified: The \emph{order schema}
	$\mathbf{U}\subseteq \mathcal{R}$ imposes an order on the tuples for
	the purpose of the operation.  The attributes of the order schema must
	form a key\footnote{\rev{Attributes that neither
			belong to the order schema nor the application schema must be
			dropped explicitly with a projection (or added to the order schema,
			thus, forming a super key)}.}.  \rev{The attributes of relation $r$ that are
		not part of the order schema $\overline{\mathbf{U}}$, i.e.,
		$\overline{\mathbf{U}} = \mathcal{R} - \mathbf{U} $ form the
		\emph{application schema}.}  The application schema identifies the
	attributes with the data to which the matrix operation is applied.
	
	The order schema $\mathbf{U}\subseteq\mathcal{R}$ splits
	relation $r$ into \rev{four} non-overlapping areas: \rev{\emph{Order
			schema} $\mathbf{U}$; \emph{order part} $r.\mathbf{U}$;
		\emph{application schema} $\overline{\mathbf{U}}$; and
		\emph{application part} $r.\overline{\mathbf{U}}$.}  The parts of
	$r$ that do not include matrix values, i.e., the order
	and application schemas ($\mathbf{U}$ and $\overline{\mathbf{U}}$) and
	the order part ($r.\mathbf{U}$), form the \emph{contextual
		information} for application part $r.\overline{\mathbf{U}}$.
	Intuitively, the order schema and application schema provide
	context for columns while the order part provides context for rows.

	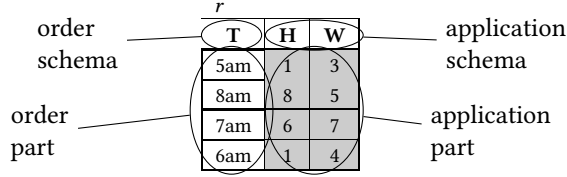
\begin{figure}[htbp]\centering
		\begin{tikzpicture}
		\node{ {\fontsize{8}{9}\selectfont
				\begin{tabular} {|c|a|a|}
				\multicolumn{3}{l}{$r$} \\ \hline
				\rowcolor{white} {\bf T}  & {\bf H} & {\bf W} \\ \hline
				5am  & 1 & 3 \\ \hline
				8am  & 8 & 5 \\ \hline
				7am  & 6 & 7 \\ \hline
				6am  & 1 & 4 \\ \hline
				\end{tabular}}
		};
		\node [shape = ellipse, draw, minimum height = .4cm, minimum width = 1.3cm] (eas) at (.45,.6) {};
		\node [shape = ellipse, draw, minimum height = .4cm, minimum width = 0.8cm] (eos) at (-.65,.6) {};
		\node [shape = ellipse, draw, minimum height = 1.7cm, minimum width = 1.1cm] (enap) at (-0.65,-0.4) {};
		\node [shape = ellipse, draw, minimum height = 1.7cm, minimum width = 1.3cm] (eap) at (0.45,-0.4) {};
		\node [align=left] (as) at (3cm,.5cm) {application\\ schema};
		\node [align=left] (ap) at (2.8cm,-.7cm) {application\\ part};
		\node [align=left] (nap) at (-3.2cm,-.7cm) {order \\ part};
		\node [align=left] (os) at (-2.7cm,.5cm) {order \\ schema};
		\draw (ap) -- (eap);
		\draw (nap) -- (enap);
		\draw (as) -- (eas);
		\draw (os) -- (eos);
		\end{tikzpicture}
		\caption{Structure of a relation instance}
		\label{fig:parts_of_a_rel}
	\end{figure}
	
	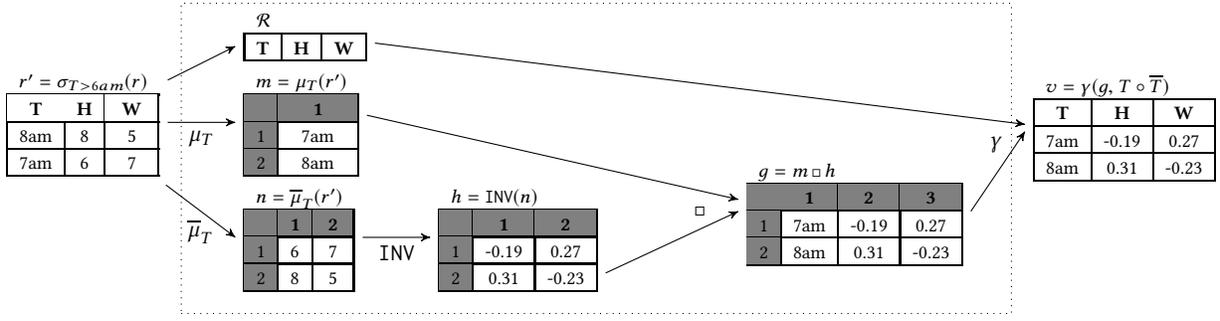
\begin{figure*}[hbtp] \centering \setlength{\tabcolsep}{5pt}
		\begin{tikzpicture}[scale=0.9,every node/.style={transform shape}]
		\node (IRR) [] { {\fontsize{8}{9}\selectfont
				\begin{tabular} {|c|c|c|}
				\multicolumn{3}{l}{\rev{$r' = \sigma_{T>6am}(r)$}}     \\ \hline
				\rowcolor{white} {\bf T} & {\bf H} & {\bf W} \\ \hline
				8am                          & 8        & 5        \\ \hline
				7am                          & 6        & 7        \\ \hline
				\end{tabular}}
		};
		
		\node (SCH) at ($(IRR.north east)+(1cm,.9cm)$) [anchor=north
		west] { {\fontsize{8}{9}\selectfont
				\begin{tabular} {|c|c|c|}
				\multicolumn{3}{l}{$\mathcal{R}$} \\ \hline
				{\bf T}  & {\bf H} & {\bf W} \\ \hline
				\end{tabular}}
		};
		
		\node (ODP) at ($(IRR.north east)+(1cm,0cm)$) [anchor=north west]
		{ {\fontsize{8}{9}\selectfont
				\begin{tabular} {|m|c|}
				\multicolumn{2}{l}{\rev{$m = \mu_T(r')$}} \\ \hline
				\rowcolor{gray}
				& {\bf 1}  \\ \hline
				1 & 7am    \\	\hline
				2 & 8am    \\	\hline
				\end{tabular}}
		};
		
		\node (OAP) at ($(IRR.north east)+(1cm,-1.7cm)$) [anchor=north
		west] { {\fontsize{8}{9}\selectfont
				\begin{tabular} {|m|c|c|}
				\multicolumn{3}{l}{\rev{$n = \overline{\mu}_T(r')$}} \\ \hline
				\rowcolor{gray} {\bf} & {\bf 1} & {\bf 2} \\ \hline
				1                     & 6       & 7       \\ \hline
				2                     & 8       & 5       \\ \hline
				\end{tabular}}
		};
		
		\node (RAP) [right = 1cm of OAP ] { {\fontsize{8}{9}\selectfont
				\begin{tabular} {|m|c|c|}
				\multicolumn{3}{l}{\rev{$h = \invMA(n)$}} \\ \hline
				\rowcolor{gray}	{\bf} & {\bf 1} & {\bf 2} \\ \hline
				1                     & -0.19     & 0.27     \\	\hline
				2                     & 0.31     & -0.23     \\ \hline
				\end{tabular}}
		};
		
		\node (concat) [below right = .2cm and 7.6cm of IRR] {$\mconc$};
		
		\node (CNC) [right = .3cm of concat ] { {\fontsize{8}{9}\selectfont
				\begin{tabular} {|m|c|c|c|}
				\multicolumn{3}{l}{\rev{$g = m \mconc h$}} \\ \hline
				\rowcolor{gray}	{\bf} & \rev{\textbf{1}} & \rev{{\bf 2}} & \rev{{\bf 3}} \\ \hline
				\rev{1}                     & \rev{7am} & \rev{-0.19}     & \rev{0.27}    \\	\hline
				\rev{2}                     & \rev{8am} & \rev{0.31}     & \rev{-0.23}    \\ \hline
				\end{tabular}}
		};
		
		\node (ORR) [above right = -.6cm and .8cm of CNC] {
			{\fontsize{8}{9}\selectfont
				\begin{tabular} {|c|c|c|}
				\multicolumn{3}{l}{\rev{$v = \gamma(g,T \circ \overline{T})$}} \\ \hline
				{\bf T}  & {\bf H} & {\bf W} \\ \hline
				7am      &  -0.19     & 0.27   \\	\hline
				8am      &   0.31     & -0.23  \\	\hline
				\end{tabular}}
		};
		
		\draw [->] (IRR) -- node[anchor=north] {} ($(SCH.west)+(0,-5pt)$);
		\draw [->] (OAP) -- node[anchor=north] {$\invMA$} (RAP);
		\draw [->] (IRR) -- node[anchor=north] {\rev{$\overline{\mu}_{T}$}} ($(OAP.west)+(0,0pt)$);
		\draw [->] (IRR) -- node[anchor=north] {\rev{$\mu_{T}$}} (ODP);
		\draw [->] ($(ODP.east)+(0,3pt)$) -- node[anchor=north] {} ($(CNC.west)+(0,5pt)$);
		\draw [->] (SCH) -- node[anchor=north] {} ($(ORR.west)+(0,1pt)$);
		\draw [->] ($(RAP.east)+(0,-15pt)$) -- node[anchor=north] {} ($(CNC.west)+(0,-0pt)$);
		\draw [->] ($(CNC.east)$) -- ($(ORR.west)+(0,-2pt)$);
		
		\node [below left = -.9cm and .2cm of ORR] {$\gamma$};
		
		\draw [dotted] ($(OAP.south west)+(-.8cm,-.2cm)$)
		rectangle	($(ORR.north west)+(-.2cm,.9cm)$);
		
		\end{tikzpicture}
		\caption{Structure of our solution for the inversion example,
			$v=\opr{\inv}{T}{}(\sigma_{T>6am}(r))$}
		\label{fig:SStr}
	\end{figure*}
	
	\begin{example}
		Order schema $\mathbf{U}=(T)$ splits relation $r$ in
		Figure~\ref{fig:parts_of_a_rel} into four parts: Order schema
		$\mathbf{U}=(T)$, application schema $\overline{\mathbf{U}}=(H,W)$,
		order part $r.\mathbf{U}=r.(T)=\{ 5am,$ $8am,$ $7am,$ $6am \}$, and
		application part $r.\overline{\mathbf{U}}=r.(H, W)=\{ (1,3),$
		$(8,5),$ $(6,7),$ $(1,4) \}$.
	\end{example}
	
	\subsection{Matrix and Relation Constructors}
	\label{sec:Matrix constructor}
	
	Figure~\ref{fig:SStr} summarizes our approach for the $\inv$ operation
	and example relation $r' = \sigma_{T>6am}(r)$: (1) Two matrix
	constructors define matrices $m$ and $n$ that correspond to order and
	application part of $r$, respectively; (2) $\invMA$ inverses matrix
	$n$ resulting in matrix $h$; and (3) the relation constructor combines
	$m \mconc h$ and $\mathcal{R}$ into result relation $v$.
	
	\begin{definition}{(\emph{Matrix constructor})}
		\rev{Let $r$ be a relation, $\mathbf{U}$ be an order schema.
			The matrix constructor $\mu_{\mathbf{U}}(r)$ returns a matrix that
			includes the values of $r.\mathbf{U}$ sorted by $\mathbf{U}$:}
		\begin{align*}
		m = \mu_{\mathbf{U}}(r)  \iff 
		& |m| = |r| \ \land \\
		& \forall 1 \leq i \leq |r| (m[i,*] = r^{\mathbf{U},i}.\mathbf{U})
		\\[-1cm]
		\end{align*}
		\label{def:mc}
	\end{definition}
	
	\rev{We use the complement notation $\overline{\mu}_{\mathbf{U}}(r)$
		to denote the matrix that includes the values of
		$r.\overline{\mathbf{U}}$ sorted by $\mathbf{U}$.}
	
	\begin{example} \label{ex:mcex}%
		Consider Figure~\ref{fig:SStr} with relation instance
		$r' = \sigma_{T>6am}(r)$ and schema $\mathcal{R} = (T, H, W)$.  The
		matrix constructor $\overline{\mu}_{T}(r')$ returns
		matrix $n$.
	\end{example}
	
	\begin{definition}{(\emph{Relation constructor})}
		Let $m$ be a matrix with unique rows, and $\mathcal{R}$ be a
		relation schema with $\#m$ attributes.  The relation constructor
		$\gamma(m, \mathcal{R})$ returns relation $r$ with schema
		$\mathcal{R}$:
		\begin{align*}
		r = \gamma(m, \mathcal{R}) 
		\iff 
		& |m| = |r| \ \land \\
		& \forall t ( t \in r \iff \exists 1 \leq i \leq |m| (t = m[i,*]))
		\\[-1cm]
		\end{align*}
		\label{def:rc}
	\end{definition}
	
	\begin{example}
		In Figure~\ref{fig:SStr}, a relation constructor is applied to
		schema $\mathcal{R}$ and the concatenated matrices $m \mconc h$ to
		construct the result relation:
		$v = \gamma(m \mconc h, \mathcal{R})$.
	\end{example}
	
	Matrix and relation constructors map between relations and
	matrices. We use constructors and matrices to define relational matrix
	operations and to analyze their properties. At the implementation
	level, constructors are very efficient since they split and combine
	lists of attribute names and \emph{do not} access the data (cf.\
	Section~\ref{sec:Implementation}).
	
	\subsection{Relational Matrix Operations}
	\label{sec: New basic operations}
	
	\begin{table*}[htbp] \centering
		\caption{Splitting and morphing relations and matrices}
		\label{tbl:RMADef}
		\setlength{\extrarowheight}{1pt}
		\def\arraystretch{1.07}
		\begin{tabular}{|c|c|l|} \hline
			\textbf{Shape type} & \textbf{Operations} & \textbf{Definition} \\ \hline
			(r$_1$,r$_1$)
			& $\usd$ &
			$\opr{\op}{\mathbf{U}}{}(r) =
			{\color{red}\gamma}(
			\opr{{\color{blue}\mu}}{\mathbf{U}}{}(r) \mconc
			\opMA(\opr{{\color{blue}\overline{\mu}}}{\mathbf{U}}{}(r)),
			\mathbf{U} \circ \columncast\mathbf{U})$
			\\
			\hline
			(r$_1$,r$_2$)
			& $\opd$ &
			$\opr{\op}{\mathbf{U};\mathbf{V}}{}(r,s) =
			{\color{red}\gamma}(
			\opr{{\color{blue}\mu}}{\mathbf{U}}{}(r)
			\mconc
			\opMA(\opr{{\color{blue}\overline{\mu}}}{\mathbf{U}}{}(r),
			\opr{{\color{blue}\overline{\mu}}}{\mathbf{V}}{}(s)),
			\mathbf{U} \circ \columncast\mathbf{V} )$
			\\
			\hline
			(r$_1$,c$_1$)
			& $\inv, \evc, \chf, \qqr$ &
			$\opr{\op}{\mathbf{U}}{}(r) =
			{\color{red}\gamma}(
			\opr{{\color{blue}\mu}}{\mathbf{U}}{}(r) \mconc
			\opMA(\opr{{\color{blue}\overline{\mu}}}{\mathbf{U}}{}(r)),
			\mathbf{U} \circ \overline{\mathbf{U}})$
			\\
			\hline
			(r$_1$,c$_2$)
			& $\mmu$ &
			$\opr{\op}{\mathbf{U};\mathbf{V}}{}(r,s) =
			{\color{red}\gamma}(
			\opr{{\color{blue}\mu}}{\mathbf{U}}{}(r)
			\mconc
			\opMA(\opr{{\color{blue}\overline{\mu}}}{\mathbf{U}}{}(r),
			\opr{{\color{blue}\overline{\mu}}}{\mathbf{V}}{}(s)),
			\mathbf{U} \circ \overline{\mathbf{V}} )$
			\\
			\hline		
			(r$_1$,1)
			& $\evl, \vsd$ &
			$\opr{\op}{\mathbf{U}}{}(r) =
			{\color{red}\gamma}(
			\opr{{\color{blue}\mu}}{\mathbf{U}}{}(r)
			\mconc
			\opMA(\opr{{\color{blue}\overline{\mu}}}{\mathbf{U}}{}(r)),
			\mathbf{U} \circ (\op))$
			\\
			\hline
			(c$_1$,r$_1$)
			& $\tra$ &
			$\oprc{\op}{\mathbf{U}}{}{C}(r) =
			{\color{red}\gamma}(
			\schemacast\overline{\mathbf{U}}
			\mconc
			\opMA(\opr{{\color{blue}\overline{\mu}}}{\mathbf{U}}{}(r)),
			(C) \circ \columncast\mathbf{U})$
			\\
			\hline
			(c$_1$,c$_1$)
			& $\rqr, \dsd$ &
			$\oprc{\op}{\mathbf{U}}{}{C}(r) =
			{\color{red}\gamma}(
			\schemacast\overline{\mathbf{U}}
			\mconc
			\opMA(\opr{{\color{blue}\overline{\mu}}}{\mathbf{U}}{}(r)),
			(C) \circ \overline{\mathbf{U}})$		
			\\
			\hline
			(c$_1$,c$_2$)
			& $\cpd, \sol$ &
			$\oprc{\op}{\mathbf{U};\mathbf{V}}{}{C}(r,s) =
			{\color{red}\gamma}(
			\schemacast\overline{\mathbf{U}}
			\mconc
			\opMA(\opr{{\color{blue}\overline{\mu}}}{\mathbf{U}}{}(r),
			\opr{{\color{blue}\overline{\mu}}}{\mathbf{V}}{}(s)),
			(C) \circ \overline{\mathbf{V}} )$
			\\
			\hline
			(r$_*$,c$_*$)
			& $\emu, \add, \sub$ &
			$\opr{\op}{\mathbf{U};\mathbf{V}}{}(r,s) =
			{\color{red}\gamma}(
			\opr{{\color{blue}\mu}}{\mathbf{U}}{}(r)
			\mconc
			\opr{{\color{blue}\mu}}{\mathbf{V}}{}(s)
			\mconc
			\opMA(\opr{{\color{blue}\overline{\mu}}}{\mathbf{U}}{}(r),
			\opr{{\color{blue}\overline{\mu}}}{\mathbf{V}}{}(s)),
			\mathbf{U} \circ \mathbf{V} \circ \overline{\mathbf{U}} )$
			\\
			\hline
			(1,1)
			& $\det, \rnk$ &
			$\oprc{\op}{\mathbf{U}}{}{C}(r) =
			{\color{red}\gamma}(
			r \circ \opMA(\opr{{\color{blue}\overline{\mu}}}{\mathbf{U}}{}(r)),
			(C,\op))$
			\\
			\hline
		\end{tabular}
	\end{table*}
	
	Relational matrix operations offer the functionality of matrix
	operations in a relational context.  The general form of a unary
	relational matrix operation is $\opr{\op}{\mathbf{U}}{}(r)$, where
	$\mathbf{U}$ is the order schema.  A binary operation
	$\opr{\op}{\mathbf{U}; \mathbf{V}}{}(r,s)$ has an additional order
	schema $\mathbf{V}$ for argument relation $s$.
	
	The result of a relational matrix operation is a relation that
	consists of (a) the \emph{base result} of the corresponding matrix
	operation, and (b) \emph{contextual information}, appropriately
	morphed from the contextual information of the argument relations
	to reflect the semantics of the operation.
	
	\begin{definition}{(\emph{Base result})} \label{def:base_res}%
		Consider a unary relational matrix operation
		$\opr{\op}{\mathbf{U}}{}(r)$.  The matrix that is the result
		of matrix operation $\opMA(\opr{\overline{\mu}}{\mathbf{U}}{}(r))$
		is the \emph{base result} of $\opr{\op}{\mathbf{U}}{}(r)$.
		The base result for binary operations is defined analogously.
	\end{definition}
	
	\begin{example}
		Consider $\opr{\inv}{T}{}(\sigma_{T>6am}(r))$ in
		Fig.~\ref{fig:SStr}.  The base result is matrix $h$, which results
		from $\invMA(\overline{\mu}_{T} (\sigma_{T>6am}(r)))$.
	\end{example}
	
	Table~\ref{tbl:RMADef} defines the details of how contextual
	information is maintained in relational matrix operations.  All
	definitions follow the structure illustrated in Figure
	\ref{fig:SStr}. A result relation is composed from order parts, base
	result, and schemas with the help of a relation constructor.  For
	example, $\inv$ is defined according to its shape tape in
	Table~\ref{tbl:RMADef}:
	$\opr{\inv}{\mathbf{U}}{}(r) = \gamma( \opr{\mu}{\mathbf{U}}{}(r)
	\mconc \invMA(\opr{\overline{\mu}}{\mathbf{U}}{}(r)), \mathbf{U} \circ
	\overline{\mathbf{U}})$, where $\opr{\mu}{\mathbf{U}}{}(r)$ are the
	rows of the order part,
	$\invMA(\opr{\overline{\mu}}{\mathbf{U}}{}(r))$ is the base result,
	and $ \mathbf{U} \circ \overline{\mathbf{U}}$ is the result schema.
	
	Operations that have a different number of rows than any of the input
	relations add a new attribute $C$ to the result relation. This
	attribute $C$ is for contextual information (cf.\
	Example~\ref{ex:inv}): Its values are either the attribute names of
	the application schema of an input relation or the operation name.
	The operations $\add, \sub, \emu$ require union compatible application
	schemas and non-overlapping order schemas.  Operations $\usd$,
	$\opd$, and $\tra$ construct the application schema of the result from
	the order schema of an input relation.  Therefore the
	cardinality of the order schemas $\mathbf{U}$ of $\tra$ and
	$\usd$, and $\mathbf{V}$ of $\opd$ must be one.
	
	\begin{example}
		\label{ex:inv}
		Consider Figures~\ref{fig:parts_of_a_rel} and \ref{fig:inv}.
		Figure~\ref{fig:qqr} illustrates the result of
		$\opr{\qqr}{T}{}(r)$.  The values of $T$ define the ordering of
		tuples for this operation.  The values of $H$ and $W$ are the values
		of matrix $Q$ computed as part of the QR decomposition.
		Figure~\ref{fig:transpose} illustrates the result of
		$\oprc{\tra}{T}{}{C}(r)$.  The column cast $\columncast T$ of
		ordering attribute $T$ provides names for the attributes in the
		transposed relation.  The result relation has a new attribute $C$
		whose values are the names of the attributes in the application
		schema of $r$.  Note that all result relations come with sufficient
		contextual information for each value.  For example, relation $r$ in
		Figure~\ref{fig:parts_of_a_rel} records that Humidity ($H$) was 1 at
		6am, which is also recorded in the transposed relation in
		Figure~\ref{fig:transpose}.
		
		\begin{figure}[htbp] \centering {\fontsize{8}{9}\selectfont
				\begin{subfigure}{0.5\linewidth}
					\setlength{\tabcolsep}{6pt}
					\centering
					\begin{tabular} {|c|c|c|}
						\multicolumn{3}{l}{\hspace*{-.1cm}$\opr{\qqr}{T}{}(r)$} \\ \hline
						{\bf T}  & {\bf H} & {\bf W} \\ \hline
						5am          & 0.1      & 0.5      \\	\hline
						6am          & 0.8      & -0.4     \\ \hline
						7am          & 0.6      & 0.4      \\	\hline
						8am          & 0.1      & 0.7      \\	\hline
					\end{tabular}
					\caption{QR decomposition}\label{fig:qqr}
				\end{subfigure}
				\hfill
				\begin{subfigure}{.45\linewidth} \setlength{\tabcolsep}{2pt}
					\begin{tabular} { | c | c | c | c | c |}
						\multicolumn{5}{l}{$\oprc{\tra}{T}{}{C}(r)$} \\ \hline
						\textbf{C} &  \textbf{5am} &  \textbf{6am} &  \textbf{7am}
						& \textbf{8am} \\ \hline
						H & 1 & 1 & 6 & 8 \\ \hline
						W & 3 & 4 & 7 & 5 \\ \hline
					\end{tabular}
					\caption{Transpose}\label{fig:transpose}
			\end{subfigure}}
			\caption{Examples of relational matrix operations}
			\label{fig:inv}
		\end{figure}
		
	\end{example}
	
	\section{RMA in Action}
	\label{sec:RMA In Action}
	
	\rev{This section gives an application example with a mixed workload
		that combines relational and linear algebra operations.  It
		maintains all data in regular relations and illustrates the
		importance of maintaining contextual information.}
	
	Consider relations $u$, $f$, and $r$ in Figure~\ref{fig:SMPLDATA}.
	Relation $u$ records name, state and year of birth of users;
	relation $f$ records title, release year and director of films;
	relation $r$ records user ratings for films.  Tuple $u_1$ states
	that user Ann lives in California and was born in 1980; tuple $f_1$
	states that film Heat was directed by Lee and was released in 1995;
	tuple $r_1$ states that Ann's ratings for Balto, Heat, and Net are,
	respectively, 2.0, 1.5, and 0.5. 
	
	\begin{figure}[htbp] \centering \setlength{\tabcolsep}{1.5pt}
		{\fontsize{7.5}{9}\selectfont
			\begin{tabular} {r|c|c|c|}
				\multicolumn{1}{l}{} & \multicolumn{3}{l}{\textit{u (user)}}  \\ \cline{2-4}	
				& \textbf{User} & \textbf{State} & \textbf{YoB} \\ \cline{2-4}
				$u_1$ & Ann & CA  & 1980 \\ \cline{2-4}	
				$u_2$ & Tom & FL  & 1965 \\ \cline{2-4}	
				$u_3$ & Jan & CA  & 1970 \\ \cline{2-4}		
			\end{tabular}
			\hspace{.0cm}
			\begin{tabular} {r|c|c|c|}
				\multicolumn{1}{l}{} & \multicolumn{3}{l}{\textit{f (film)}} \\ \cline{2-4}	
				& \textbf{Title}  & \textbf{RelY} & \textbf{Director} \\ \cline{2-4}
				$f_1$ & Heat  & 1995  & Lee \\ \cline{2-4}
				$f_2$ & Balto  & 1995  & Lee  \\ \cline{2-4}	
				$f_3$ & Net  & 1995  & Smith  \\ \cline{2-4}	
			\end{tabular}
			\hspace{.0cm}
			\begin{tabular} {r|c|c|c|c|}
				\multicolumn{1}{l}{} & \multicolumn{4}{l}{\textit{r (rating)}}\\ \cline{2-5}
				& \textbf{User}  & \textbf{Balto} & \textbf{Heat} & \textbf{Net} \\ \cline{2-5}
				$r_1$ & Ann &  2.0  & 1.5 & 0.5 \\ \cline{2-5}
				$r_2$ & Tom &  0.0  & 0.0 & 1.5\\ \cline{2-5}
				$r_3$ & Jan &  1.0 &  4.0 & 1.0 \\ \cline{2-5}
			\end{tabular}
		}
		\caption{Example database}
		\label{fig:SMPLDATA}
	\end{figure}
	
	The task is to determine how similar each of Lee's films is to any
	other film, based on the ratings from California users.  The
	covariance \cite{SamCov} is used to compute this similarity.  In
	addition, we need relational algebra operations (e.g., selection
	$\sigma$, aggregation $\vartheta$, rename $\rho$, and join $\Join$) to
	retrieve selected ratings and films, aggregate ratings, and combine
	information from different tables.  The key observation is that a
	mixture of matrix and relational operations is required to determine
	the similarities of the ratings.
	
	The solution in Figure~\ref{fig:SimilarityComputation} includes
	three key steps: Data preparation ($w1$), covariance computation
	($w2$-$w7$), and retrieving Lee's films together with all
	similarities ($w8$).  Note the seamless integration of linear and
	relational algebra\footnote{We use the first character of the
		attribute name to refer to attributes.}: The entire process
	frequently switches between linear and relational operations.  
	
	\begin{figure}[htbp] \centering
		\begin{align*}
		& w1 = \pi_{U, B, H, N}(\sigma_{S = \mq{CA}}(u \Join r)) \\
		& w2 = \vartheta_{AVG(B), AVG(H), AVG(N)}(w1) \\
		& w3 = \pi_{U, B, H, N}(\opr{\sub}{U;V}
		{}(w1, \rho_{V}(\pi_{U}(w1)) \times w2))  \\
		& w4 = \oprc{\tra}{U}{}{T}(w3) \\
		& w5 = \opr{\mmu}{C;U}{}(w4,w3) \\
		& w6 = w5 \times \rho_{M}(\vartheta_{COUNT(*)}(w1)) \\
		& w7 = \pi_{C,B/(M-1), H/(M-1), N/(M-1)}(w6) \\
		& w8 = \pi_{T, B, H, N}(\sigma_{D = \mq{Lee}} (w7 \Join_{C=T} f))\\[-20pt]
		\end{align*}
		\caption{Computing the similarity of the ratings}
		\label{fig:SimilarityComputation}
	\end{figure}
	
	In the following we discuss the algebra expressions in Figure
	\ref{fig:SimilarityComputation}.  First, we join $u$ and $r$ to
	select ratings from California users ($w1$).  Next, we compute the
	covariance using its standard definition \cite{SamCov}:
	$ cov(X,Y) =  \frac{1}{n-1}[(X-E[X])*(Y-E[Y])^{\mathsf{T}}]$.
	The expectation of an attribute, e.g.,
	$E(H) = \vartheta_{AVG(H)}(...)$, is computed via aggregation
	($w2$).  \emph{Relational matrix operations}, $\sub$, $\tra$ and
	$\mmu$, are used to subtract ($X-E[X]$), transpose ($^\mathsf{T}$),
	and multiply ($*$) relations ($w3, w4, w5$).  Next, we compute the
	unbiased covarinace ($w6, w7$).  Finally, we join $w7$ and $f$ to
	select Lee's films.
	
	Figure \ref{fig:CovSteps} illustrates relations $w3$, $w4$, and
	$w8$.  Consider transpose $\oprc{\tra}{U}{}{T}(w3)$ with order
	schema $U$ and application schema $\overline{U} = (B,H,N)$.  The
	result of this operation is a relation $w4$ with schema
	($C$,$Ann$,$Jan$).  The values of attribute $C$ are the attribute
	names in the application schema of $w3$.  Note that each operation
	preserves schema and ordering information as the crucial parts of
	contextual information.  This makes it possible to interpret the
	tuples in result relation $w8$.  For example, tuple $z_1$ states
	that Lee's film Balto has the smallest covariance to film Net.
	\vspace*{-5pt}
	
	\begin{figure}[htb] \centering \setlength{\tabcolsep}{1.5pt}
		{\fontsize{8}{9}\selectfont
			\begin{tabular} {|c|c|c|c|}
				\multicolumn{4}{l}{\textit{w3}}\\ \hline
				\textbf{U}  & \textbf{B} & \textbf{H} & \textbf{N} \\\hline
				Ann &  -1.25  & 0.5   & 0.25 \\ \hline
				Jan &  1.25  & -0.5   & 0.25\\ \hline
			\end{tabular}
			\hspace{.5cm}
			\begin{tabular} { |c|c|c| }
				\multicolumn{3}{l}{\textit{w4}} \\ \hline
				\textbf{C} & \textbf{Ann} & \textbf{Jan}  \\ \hline
				B          & -1.25     & 1.25          \\ \hline
				H          & -0.5      & 0.5           \\ \hline
				N          & -0.25    & 0.25            \\ \hline
			\end{tabular}
			\hspace{.2cm}
			\begin{tabular} {r|c|c|c|c|}
				\multicolumn{1}{l}{} & \multicolumn{4}{l}{\textit{w8}}\\ \cline{2-5}
				& \textbf{T}  & \textbf{B} & \textbf{H} & \textbf{N} \\ \cline{2-5}
				$z_1$ & B & 1.56   & -0.62  & -2.5  \\ \cline{2-5}
				$z_2$ & H &  -0.62  & 0.25    & 1 \\ \cline{2-5}
			\end{tabular}
		}
		\caption{Steps during the computation}
		\label{fig:CovSteps}
	\end{figure}  
	
	\section{Properties of RMA}
	\label{sec:Properties_RMA}
	
	This section defines two crucial requirements for relational matrix
	operations.  \emph{Matrix consistency} guarantees that the result of a
	relational matrix operation can be reduced to the result of the
	corresponding matrix operation.  \emph{Origins} guarantee that each
	result relation includes sufficient inherited contextual information
	\rev{to relate argument and result relation}.  We prove that each
	relational matrix operations is matrix consistent and returns a
	relation with origins.

	\subsection{Matrix Consistency}
	\label{sec:Matrix Consistency}	
	
	Matrix consistency ensures that the result relation includes all cell
	values that are present in the base result and the order of rows in
	the base result can be derived from contextual information in the
	result relations.  First, we define \emph{reducibility} to transition
	from relations to matrices.
	
	\begin{definition}{(\emph{Reducible})}
		Let $r$ be a relation, $\mathbf{U}$ be an order schema.  Relation $r$ is
		\emph{reducible} to matrix $m$ iff $m$ can be constructed from the
		attribute values of $\overline{\mathbf{U}}$ in relation $r$ sorted by
		$\mathbf{U}$: \vspace{-1em}
		\begin{center}
			$r \rightarrow_{\mathbf{U}} m \quad \Longleftrightarrow
			\quad \overline{\mu}_{\mathbf{U}}(r) = m$
		\end{center}
		\label{def:red}
	\end{definition}
	
	\begin{example}
		Consider Fig.~\ref{fig:SStr} with relation $r' = \sigma_{T>6am}(r)$,
		matrix $n$, and order schema $T$.  From Example~\ref{ex:mcex} we
		have $\overline{\mu}_{T}(r') = n$.  Relation $r'$ is reducible to
		matrix $n$ since $n$ can be constructed from the values of $H$ and
		$W$ in the argument relation sorted by $T$, i.e.,
		$r' \rightarrow_{T} n$.
	\end{example}
	
	\begin{definition}{(\emph{Matrix consistency})} Consider a unary
		matrix operation $\opMA(m)$.  The corresponding relational matrix
		operation $\op$ is \emph{matrix consistent} iff for all relations
		$r$ that are reducible to matrix $m$, the result relation
		$\opr{\op}{\mathbf{U}}{}(r)$ is reducible to $\opMA(m)$:
		\vspace{-0.5em}
		\begin{align*}
		\forall r,m,\mathbf{U}( 
		r \rightarrow_{\mathbf{U}} m\  \Longrightarrow \ 
		\exists  \mathbf{U'}( \opr{\op}{\mathbf{U}}{}(r)
		\rightarrow_{\mathbf{U'}} \opMA(m)))
		\end{align*}
		A binary relational matrix operation is matrix consistent if its
		result is reducible to the result of the corresponding binary matrix
		operation.
		\label{def:mcon}
	\end{definition}	
	
	\begin{example}
		Consider Figures~\ref{fig:parts_of_a_rel} and \ref{fig:mcexample}
		with relation $r$, matrix $g$, matrix $\rqrMA(g)$ and relation
		$\oprc{\rqr}{T}{}{C}(r)$.
		\begin{itemize}
			\item $r \rightarrow_{T} g$: Relation $r$ is reducible to matrix
			$g$
			\item $\oprc{\rqr}{T}{}{C}(r) \rightarrow_{C} \rqrMA(g)$:
			relation $\oprc{\rqr}{T}{}{C}(r)$ is reducible to matrix
			$\rqrMA(g)$
		\end{itemize}
		\vspace{-10pt}
		\begin{figure}[htbp] \centering%
			{\fontsize{8}{9}\selectfont
				\begin{tabular} {|m|c|c|}
					\multicolumn{3}{l}{$g$} \\ \hline
					\rowcolor{gray}	{\bf} & {\bf 1} & {\bf 2} \\ \hline
					1                     & 1     & 3    \\	\hline
					2                     & 1     & 4    \\ \hline
					3                     & 6     & 7     \\ \hline
					4                     & 8     & 5     \\ \hline
				\end{tabular}
				\quad
				\begin{tabular} {|m|c|c|}
					\multicolumn{3}{l}{$\rqrMA(g)$} \\ \hline
					\rowcolor{gray}	{\bf} & {\bf 1} & {\bf 2} \\ \hline
					1                     & -10.1     & -8.8     \\	\hline
					2                     & 0.0    & -4.6     \\ \hline
				\end{tabular}
				\quad
				\begin{tabular} {|c|c|c|}
					\multicolumn{3}{l}{$\oprc{\rqr}{T}{}{C}(r)$} \\ \hline
					{\bf C} & {\bf H} & {\bf W} \\ \hline
					H       & -10.1   & -8.8     \\	\hline
					W       & 0.0     & -4.6     \\ \hline
				\end{tabular}
			}
			\caption{Example of matrix consistency}
			\label{fig:mcexample}
		\end{figure}
	\end{example}
	
	\subsection{Origins of Result Relations}
	\label{sec:Origins_res_rel}
	
	The result of a relational matrix operation is a relation that, in
	addition to the base result, includes a \emph{row origin} and a
	\emph{column origin}.  Origins (1) uniquely define the relative
	positioning of result values, (2) give a meaning to values with
	respect to the applied operation, and (3) establish a connection
	between \rev{argument relations of an operation and its result relation}.
	
	\begin{example}
		Consider inversion and result relation $v$ in Figure \ref{fig:SStr}.
		Values $7am$ and $8am$ show that (1) value -0.19 precedes value 0.31
		because $7am$ precedes $8am$; (2) -0.19 is the inversion value for
		humidity and for time $7am$; (3) value -0.19 in relation $v$ is
		connected to value 6 in the argument relation since they have the
		same origins ($7am$ and H).
	\end{example}
	
	Origins are either inherited order schemas, application schemas from argument
	relations, or constants.  The shape type of an operation determines the
	cardinality of inherited contextual information.  The indices in the shape type
	specify the input relation, from which an origin is inherited. For example, if
	the first element of the shape type is c$_1$, the row origin is the schema cast
	of the application schema of the first argument relation. Note that indices
	$_*$ and $_2$ are only possible for binary operations.
	
	\begin{definition}{(\emph{Origins})}
		\label{def:origins}
		Consider a unary, $v = \opr{\op}{\mathbf{U}}{}(r)$, or binary,
		$v = \opr{\op}{\mathbf{U};\mathbf{V}}{}(r,s)$, matrix consistent
		operation with shape type ($x$,$y$), base result $m$, and attribute
		list ${\mathbf{U}'}$ such that $v \rightarrow_{\mathbf{U}'} m$.
		Consider Table~\ref{tbl:origins}.  $v.{\mathbf{U}'}$ is a \emph{row
			origin} iff $v.{\mathbf{U}'}$ is equal to \textit{ro} for the
		given shape type $x$.  $\overline{\mathbf{U}'}$ is a \emph{column
			origin} iff $\overline{\mathbf{U}'}$ is equal to \textit{co} for
		the given shape type $y$.
		
		\begin{table}[htbp] \centering
			\caption{Definition of origins for shape type ($x$,$y$)}
			\label{tbl:origins}
			{\fontsize{8}{9}\selectfont
				\begin{tabular}{|c|c|} \hline 
					\textbf{x} & \textit{ro} \\ \hline \hline
					r$_1$ &  $r.\mathbf{U}$ \\ \hline
					r$_2$ &  $s.\mathbf{V}$\\ \hline
					c$_1$ &  $\schemacast \overline{\mathbf{U}}$ \\ \hline
					c$_2$ &  $\schemacast \overline{\mathbf{V}}$ \\ \hline
					r$_*$ &  $(r.\mathbf{U}, s.\mathbf{V})$ \\ \hline
					c$_*$ &  $(\schemacast \overline{\mathbf{U}}, \schemacast \overline{\mathbf{V}})$ \\ \hline
					1 &   $'r'$ \\ \hline 
				\end{tabular}
				\qquad 		
				\begin{tabular}{|c|c|} \hline 
					\textbf{y} & \textit{co} \\ \hline \hline
					r$_1$ &  $\columncast \mathbf{U}$ \\ \hline
					r$_2$ &  $\columncast \mathbf{V}$\\ \hline
					c$_1$ &  $\overline{\mathbf{U}}$ \\ \hline
					c$_2$ &  $\overline{\mathbf{V}}$ \\ \hline
					r$_*$ &  $\columncast \mathbf{U}$ \\ \hline
					c$_*$ &  $\overline{\mathbf{U}}$ \\ \hline
					1 &   $'op'$ \\ \hline 
			\end{tabular}}
		\end{table}	
	\end{definition}
	
	\begin{example}
		\label{ex:origins}
		Figure \ref{fig:origins_example} illustrates relation $r$ and
		the origins for operations
		\begin{itemize}\itemsep=3pt
			\item $\oprc{\rnk}{H}{}{R}(\pi_{H,W}(r))$ with shape type (1,1)
			\item $\opr{\usd}{T}{}(r)$ with shape type (r$_1$,r$_1$)
			\item $\opr{\qqr}{W,T}{}(r)$ with shape type (r$_1$,c$_1$)
		\end{itemize}
		\emph{Column origins} ($co$) are marked by rectangles (all values
		inside a rectangle form together the column origin for the
		relation).  \emph{Row origins} ($ro$) are marked by ellipses.
		
		\begin{figure}[!htb] \centering%
			\setlength{\tabcolsep}{2pt}
			{\fontsize{8}{9}\selectfont
				\begin{tikzpicture}
				
				\node (a) [] {
					\begin{tabular} {|c|c|c|}
					\multicolumn{3}{l}{$r$} \\ \hline
					\rowcolor{white} {\bf T}  & {\bf H} & {\bf W} \\ \hline
					5am  & 1 & 3 \\ \hline
					8am  & 8 & 5 \\ \hline
					7am  & 6 & 7 \\ \hline
					6am  & 1 & 4 \\ \hline
					\end{tabular}
				};
				
				\node (c) [above = .5cm of a] {
					\begin{tabular} {|c|c|}
					\multicolumn{2}{l}{$p1 = \oprc{\rnk}{H}{}{R}(\pi_{H,W}(r))$} \\ \hline
					{\bf C} & {\bf \rectangled{rnk}} \\ \hline
					\ellipsed{$\; r \; $} & 1 \\ \hline
					\multicolumn{2}{l}{$ro$  = $r$ = (r) }\\
					\multicolumn{2}{l}{$co$ = $\op$ = (rnk)}
					\end{tabular}
				};
				
				\node (b) [below right = -2.3cm and .5cm  of c] {
					\begin{tabular} {|c|c|c|c|c|}
					\multicolumn{5}{l}{$p2 = \opr{\usd}{T}{}(r)$} \\ \hline
					{\bf T} & \rectangled{\textbf{5am}} & \rectangled{\textbf{6am}}
					& \rectangled{\textbf{7am}} & \rectangled{\textbf{8am}} \\ \hline
					\ellipsed{5am}  & -0.2 & 0.5  & -0.8 & 0.4 \\	\hline
					\ellipsed{6am}  & -0.3 & 0.6 & 0.6 & 0.4 \\ \hline
					\ellipsed{7am}  & -0.7 & 0.2 & 0.0 & -0.7 \\	\hline
					\ellipsed{8am}  & -0.7 & -0.6 &0.0 & 0.4 \\	\hline
					\multicolumn{5}{l}{$ro$ = $r.\mathbf{U}$ = (5am, 6am, 7am, 8am)}\\
					\multicolumn{5}{l}{$co$  = $\columncast\mathbf{U}$ = (5am, 6am, 7am, 8am)}
					\end{tabular}
				};
				
				\node (d) [below = 0cm of b] {
					\begin{tabular} {|c|c|c|l}
					\multicolumn{3}{l}{$p3 = \opr{\qqr}{W,T}{}(r)$} \\ \cline{1-3}
					{\bf W} & {\bf T}  & \rectangled{{\bf H}} \\ \cline{1-3}
					3       & 5am      & 0.1      \\ \cline{1-3}
					4       & 6am      & 0.1     \\ \cline{1-3}
					5       & 8am      & 0.7      \\ \cline{1-3}
					7       & 7am      & 0.5      \\ \cline{1-3}
					
					\multicolumn{4}{l}{\hspace*{-2cm}$ro$ = $r.\mathbf{U}$  =
						((3,5am), (4,6am), (5,8am), (7,7am))}\\
					\multicolumn{4}{l}{\hspace*{-2cm}$co$  = $\overline{\mathbf{U}}$ = (H)}
					\end{tabular}
				};
				
				\draw [->] (a) -- node[anchor=north] {} (b);
				\draw [->] (a) -- node[anchor=north] {} (c);
				\draw [->] (a) -- node[anchor=north] {} (d);
				
				\node [shape = ellipse, draw, minimum height = .35cm, minimum width = 1.2cm] (es) at (3,-0.45) {};
				\node [shape = ellipse, draw, minimum height = .35cm, minimum width = 1.2cm] (es) at (3,-.85) {};
				\node [shape = ellipse, draw, minimum height = .35cm, minimum width = 1.2cm] (es) at (3,-1.22) {};
				\node [shape = ellipse, draw, minimum height = .35cm, minimum width = 1.2cm] (es) at (3,-1.6) {};
				\end{tikzpicture}
			}
			\vspace*{-10pt}
			\caption{Examples of origins}
			\label{fig:origins_example}
		\end{figure}
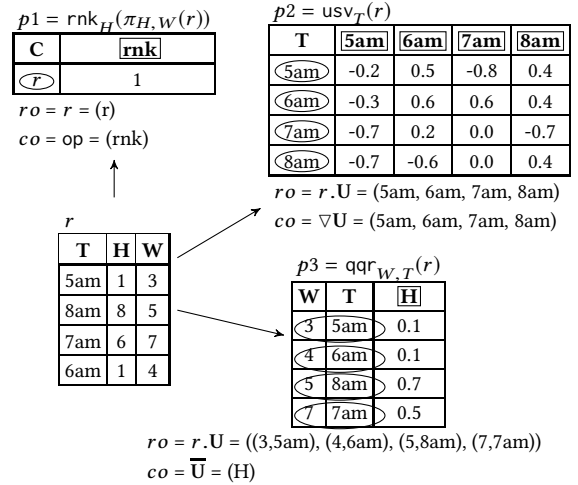
		
		For $p2 = \opr{\usd}{T}{}(r)$, we have $\mathbf{U} = (T)$,
		$\overline{\mathbf{U}} = (H,W)$, $\mathbf{U}' = (T)$, and
		$\overline{\mathbf{U'}}$ = (5am, 6am, 7am, 8am). The shape type of
		$\usd$ is (r$_1$, r$_1$) (see Table \ref{tbl:RMADef}) this makes
		$p2.T = r.T$ a row origin and $\columncast T$ = (5am, 6am, 7am, 8am) a
		column origin.
	\end{example}

	\subsection{Correctness}
	\label{sec:Correctness}	
	
	\begin{theorem}
		\label{lem:rel-con-inf}
		All relational matrix operations return matrix consistent relations
		with a row and column origin.
	\end{theorem}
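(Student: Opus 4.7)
The plan is to proceed by case analysis over the ten shape-type templates of Table~\ref{tbl:RMADef}, since every relational matrix operation instantiates one of these templates. For each shape type I verify (i) matrix consistency and (ii) the existence of a row and column origin matching Table~\ref{tbl:origins}. The key structural observation that makes a uniform argument possible is that every entry in Table~\ref{tbl:RMADef} has the common form $v = \gamma(M \mconc \opMA(N),\, \mathbf{U}' \circ \mathbf{V}')$, where $M$ is a ``context matrix'' (either $\mu_\mathbf{U}(r)$, $\mu_\mathbf{V}(s)$, $\schemacast\overline{\mathbf{U}}$, $\schemacast\overline{\mathbf{V}}$, a concatenation of these, or a singleton for shape $1$), the schema splits so that $\mathbf{U}'$ indexes the columns of $M$ and $\mathbf{V}'$ indexes the columns of $\opMA(N)$, and $N$ is the appropriate reduction $\overline{\mu}_\mathbf{U}(r)$ (and, for binary ops, $\overline{\mu}_\mathbf{V}(s)$).

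Given this template, my plan for each row is the same three-step argument. First I would show that $\gamma$ is applicable, i.e.\ that $M \mconc \opMA(N)$ has unique rows: for $\mu$-contexts this holds because $\mathbf{U}$ (resp.\ $\mathbf{V}$) is a key of $r$ (resp.\ $s$), for $\schemacast$-contexts it holds because attribute names within a schema are distinct, for compound contexts it follows by combining the two, and for the degenerate shape $1$ there is only a single row. Second, for matrix consistency, I would take $\mathbf{U}'$ as the first segment of the result schema and show $\overline{\mu}_{\mathbf{U}'}(v) = \opMA(N)$: by Definition~\ref{def:rc} the tuples of $v$ are exactly the rows of the concatenated matrix, and by Definitions~\ref{def:mc} and~\ref{def:red} the complement constructor $\overline{\mu}_{\mathbf{U}'}$ sorts those tuples in ascending $\mathbf{U}'$ order and returns their $\mathbf{V}'$ portion, which is $\opMA(N)$ by construction. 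Third, I read off the origins directly from the template: the values $v.\mathbf{U}'$ coincide with the entries of $M$, which equal the $ro$ column of Table~\ref{tbl:origins} for shape $x$, and $\overline{\mathbf{U}'} = \mathbf{V}'$ coincides with the $co$ column for shape $y$.

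The main obstacle is ensuring that $\overline{\mu}_{\mathbf{U}'}$ genuinely recovers the original row order of $M \mconc \opMA(N)$, since $\gamma$ discards order and reducibility requires an ascending sort on $\mathbf{U}'$. For the r$_i$-row shapes this is immediate, because the $\mu$-based context already lists rows in ascending $\mathbf{U}$ (resp.\ $\mathbf{V}$) order. The delicate cases are the $\schemacast$-based shapes (c$_1$,r$_1$) for $\tra$ and (c$_1$,c$_1$), (c$_1$,c$_2$) for $\rqr, \dsd, \cpd, \sol$, where I must argue that sorting by the attribute-name column $C$ reproduces the order in which $\overline{\mathbf{U}}$ enumerates its attributes, so that the transposed or reduced rows realign with their base-result counterparts. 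For the r$_*$ and c$_*$ rows (used by $\add, \sub, \emu$), I would additionally check that the compound context $(r.\mathbf{U},\, s.\mathbf{V})$ remains a key of $v$, which follows from the non-overlapping order schema assumption stated after Table~\ref{tbl:RMADef}. The (1,1) case for $\det, \rnk$ is handled separately as a one-row degenerate verification in which both origins are the literal constants prescribed by Table~\ref{tbl:origins}.
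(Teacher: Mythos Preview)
Your proposal is correct and follows essentially the same approach as the paper: a case analysis over the shape-type templates of Table~\ref{tbl:RMADef}, establishing matrix consistency first (by unfolding the definitions of $\gamma$, $\mu$, $\overline{\mu}$, and reducibility and simplifying) and then verifying that the resulting $\mathbf{U}'$ and $\overline{\mathbf{U}'}$ match the $ro$/$co$ entries of Table~\ref{tbl:origins}. Your write-up is in fact more explicit than the paper's, which only sketches the unary case and the $\qqr$ instance before appealing to ``the same reasoning''; in particular, your observation that the $\schemacast$-based row contexts require sorting on $C$ to reproduce the enumeration order of $\overline{\mathbf{U}}$, and that the (r$_*$,c$_*$) case relies on the non-overlapping order-schema assumption, are genuine technical points the paper's proof leaves implicit.
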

	
	 \begin{proof}%
	   First, we prove that relational matrix operations are
	   matrix consistent.  Second, we show that inherited contextual 
	   information in the result  corresponds to
	   Definition~\ref{def:origins}.
	
	   (1) Consider a unary operation with shape type ($x$,$y$).  We start
	   with the definition of matrix consistency
	   (Definition~\ref{def:mcon}), $\mathbf{U}'$ equal to $\mathbf{U}$
	   ($x$ = r$_1$) or $C$ (x = c$_1$), and $\overline{\mathbf{U}'}$ equal to
	   $\overline{\mathbf{U}}$ (y = c$_1$) or $\columncast {\mathbf{U}}$ (y = r$_1$).
	   We instantiate the implication with the definition of the operation
	   in Table~\ref{tbl:RMADef}. Then, we simplify the right hand side of
	   the implication, substituting it with the equality in
	   Definition~\ref{def:red}. Next, we expand the equality with the
	   definitions of relational and matrix constructors
	   (Definitions~\ref{def:mc} and ~\ref{def:rc}).  A series of
	   simplifications yields the equality of the right and left hand side.
	
	   (2) Consider operation $v = \opr{\qqr}{\mathbf{U}}{}(r)$ with shape
	   type (r$_1$, c$_1$). Matrix consistency of $\qqr$ was shown in the
	   first part of the proof.  The shape type of $\qqr$ is (r$_1$,c$_1$)
	   and we get $\overline{\mathbf{U}} = \overline{\mathbf{U}'}$, and
	   $\mathbf{U} = \mathbf{U}'$.  $v.\mathbf{U}$ is the row origin
	   because x=r$_1$ and $\overline{\mathbf{U}}$ is the column origin
	   because y=c$_1$ (see Table \ref{tbl:origins}).
	
	   The same reasoning applies to the other types of operations in
	   Table~\ref{tbl:RMADef}.  Thus, each relational matrix operation
	   returns a result relation with a row and a column origin.
	 \end{proof}
	
	The following example uses a sequence of $\opr{\tra}{}{}$ operations
	to illustrate the importance of origins.  A result relation with
	origins inherits sufficient contextual information, such that each
	value can be interpreted.  Origins also carry sufficient information
	about the order of rows, such that in sequences of relational matrix
	operations no ordering information is lost between operations.
	
	\begin{example}
		Consider a relation instance $r$ that is reducible to matrix $n$.
		Figure~\ref{fig:IStr} shows the matrix expression
		$\traMA(\traMA(n))$ and the respective relational matrix
		expression
		$\oprc{\tra}{C}{}{C}(\oprc{\tra}{T}{}{C}(r))$.
		Operation $_{T}\tra_{}(r)$ returns relation $r1$, which in
		addition to the application schema (attributes $5am$, $6am$, $7am$,
		$8am$) also includes attribute $C$, which is preserved together with
		the application schema.
		
		\begin{figure}[ht]\centering\small
			\setlength{\tabcolsep}{2pt}
			\begin{tikzpicture}[]
			\node[align=center] (r1) { {\fontsize{8}{9}\selectfont
					\begin{tabular} {|c|a|a|}
					\multicolumn{3}{l}{$r$} \\ \hline
					\rowcolor{white} \textbf{T}  & \textbf{H}
					& \textbf{W} \\ \hline
					5am &  1 & 3 \\ \hline
					8am &  8 & 5 \\ \hline
					7am & 6 & 7 \\ \hline
					6am & 1 & 4 \\ \hline
					\end{tabular}}
			};
			\node (m1) at ($(r1.north east)+(3.0cm,0cm)$) [anchor=north west] {
				\setlength{\tabcolsep}{4pt}{\fontsize{8}{9}\selectfont
					\begin{tabular} {|m|c|c|}
					\multicolumn{3}{l}{$n$} \\ \hline
					\rowcolor{gray} {\bf} &{\bf 1} &  {\bf 2}  \\ \hline
					1   & 1 &    3 \\ \hline
					2   & 1 &    4 \\ \hline
					3   & 6 &    7 \\ \hline
					4   & 8 &    5 \\ \hline
					\end{tabular}}
			};
			\node[align=center] (r2) at ($(r1.south)+(-0.0cm,-.6cm)$) [anchor=north]{
				{\fontsize{8}{9}\selectfont
					\begin{tabular} {|c|a|a|a|a|}
					\multicolumn{3}{l}{$r1$} \\ \hline
					\rowcolor{white} {\bf C} & {\bf 5am} & {\bf 6am}
					& {\bf 7am} & {\bf 8am} \\ \hline
					H & 1 & 1 & 6 & 8 \\ \hline
					W & 3 & 4 & 7 & 5 \\ \hline
					\end{tabular}}
			};
			\node (m2) at ($(m1.south east)+(0.46cm,-.6cm)$) [anchor=north east] {
				\setlength{\tabcolsep}{4pt}{\fontsize{8}{9}\selectfont
					\begin{tabular} {|m|c|c|c|c|}
					\multicolumn{5}{l}{$n1$} \\ \hline
					\rowcolor{gray} {\bf} & {\bf 1} & {\bf 2} & {\bf 3} & {\bf 4}  \\ \hline
					1 & 1 & 1 & 6 & 8 \\ \hline
					2 & 3 & 4 & 7 & 5 \\ \hline
					\end{tabular}}
			};
			\node[align=center] (r3) at ($(r2.south)+(0.0cm,-.6cm)$) [anchor=north]{
				{\fontsize{8}{9}\selectfont
					\begin{tabular} {|c|a|a|}
					\multicolumn{3}{l}{$r2$} \\ \hline
					\rowcolor{white} {\bf C} & {\bf H} & {\bf W} \\ \hline
					5am       & 1       & 3         \\ \hline
					6am       & 1       & 4         \\ \hline
					7am       & 6       & 7         \\ \hline
					8am       & 8       & 5         \\ \hline
					\end{tabular}}
			};
			\node (m3) at ($(m2.south east)+(-0.46cm,-.6cm)$) [anchor=north east] {
				\setlength{\tabcolsep}{4pt}{\fontsize{8}{9}\selectfont
					\begin{tabular} {|m|c|c|}
					\multicolumn{3}{l}{$n2$} \\ \hline
					\rowcolor{gray} {\bf} &{\bf 1} &  {\bf 2}  \\ \hline
					1   & 1 &    3 \\ \hline
					2   & 1 &    4 \\ \hline
					3   & 6 &    7 \\ \hline
					4   & 8 &    5 \\ \hline
					\end{tabular}}
			};
			
			\draw[arrow] (r1.south)--(r2.north) node[pos=.5, right] {$\oprc{\tra}{T}{}{C}(r)$};
			\draw[arrow] (r2.south)--(r3.north) node[pos=.5, right] {$\oprc{\tra}{C}{}{C}(r1)$};
			
			\draw[arrow] (r1.east)--(m1.west) node[pos=.5, anchor = south, below]
			{$r \rightarrow_{T} n$};
			\draw[arrow] (r2.east)--(m2.west) node[pos=.5, anchor = south, below]
			{$r1 \rightarrow_{C} n1$};
			\draw[arrow] (r3.east)--(m3.west) node[pos=.5, anchor = south,below]
			{$r2 \rightarrow_{C} n2$};
			
			\draw[arrow] (m1.south)--(m2.north) node[pos=.5,left]{$\traMA(n)$};
			\draw[arrow] (m2.south)--(m3.north) node[pos=.5,left]{$\traMA(n1)$};
			
			\end{tikzpicture}
			\caption{Origins and matrix consistency}
			\label{fig:IStr}
		\end{figure}
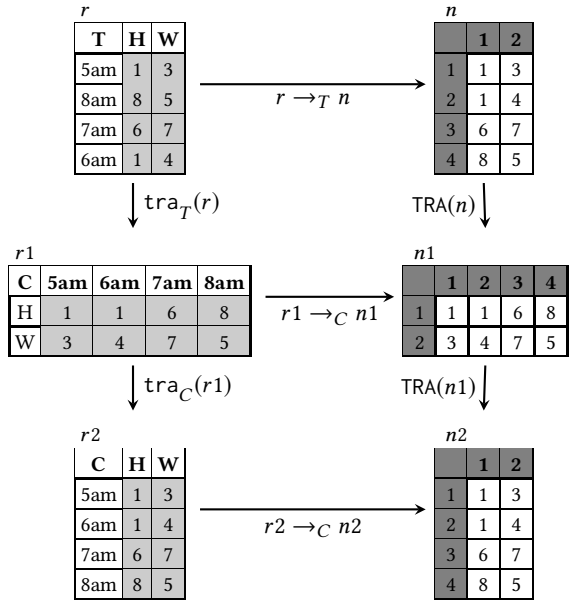
		
	\end{example}

	\section{Implementation}
	\label{sec:Implementation}
	
	We discuss the integration of our solution into
	\mbox{MonetDB}.  The implementation of relational matrix operations
	includes the processing of contextual information and the computation
	of the base result.  Contextual information is handled inside
	\mbox{MonetDB}, while the computation of the base result can be done
	in \mbox{MonetDB} or delegated to external libraries (e.g., MKL).  
	The integration of each relational matrix operation requires
	extensions throughout the system, but does not change the query
	processing pipeline \rev{and no new data structures are introduced. To
		extend \mbox{MonetDB} with addition, QR decomposition, linear
		regression, and the transformation of numerical data to the MKL
		format we touch 20 (out of 4500) files and add 2500 lines of code.}
	
	\subsection{ \mbox{MonetDB} }
	\label{sec:monetdb}
	
	\mbox{MonetDB} stores each column of a table as a binary association
	table (BAT). A BAT is a table with two columns: Head and tail. The
	head is a column with object identifiers (OID), while the tail is a
	column with attribute values.  All attribute values of a tuple in a
	relation have the same OID value.  Thus, a tuple can be constructed by
	concatenating all tail values with the same OID.  \mbox{MonetDB}
	operations manipulate BATs and relational operations are represented
	and executed as sequences of BAT operations. Example BAT operations
	are $B_1*B_2$, $B_1/B_2$, and $B_1-B_2$ for element-wise multiplication,
	division, and subtraction, and $sum(B)$ to sum the values in BAT $B$.
	
	 \begin{example}
	  \mbox{MonetDB} stores relation $\sigma_{T>6am}(r)$ from
	 Figure~\ref{fig:SStr} in three BATs as shown in
	 Figure~\ref{fig:STTree}.
	
	 	\begin{figure}[!htb] \centering \setlength{\tabcolsep}{2pt}
	 	{\fontsize{8}{9}\selectfont
	 		\begin{tabular} { |c|c| }
	 			\multicolumn{2}{l}{T} \\ \hline			
	 			{\bf OID} & {\bf Val} \\ \hline
	 			0         & 8am     \\	\hline
	 			1         & 7am     \\	\hline
	 		\end{tabular}}
	 	\quad
	 	{\fontsize{8}{9}\selectfont
	 		\begin{tabular} { |c|c| }
	 			\multicolumn{2}{l}{H} \\ \hline
	 			{\bf OID} & {\bf Val} \\ \hline
	 			0         & 8        \\	\hline
	 			1         & 6       \\	\hline
	 	\end{tabular}}
	 	\quad
	 	{\fontsize{8}{9}\selectfont
	 		\begin{tabular} { |c|c| }
	 			\multicolumn{2}{l}{W} \\ \hline	
	 			{\bf OID} & {\bf Val} \\ \hline
	 			0         & 5         \\	\hline
	 			1         & 7         \\	\hline
	 	\end{tabular}}
	  \caption{BAT representation of $\sigma_{T>6am}(r)$}
	 	\label{fig:BATREPR}
	 \end{figure}
	 \end{example}
	
	One important BAT operation is \emph{leftfetchjoin} ($\downarrow$),
	which returns a BAT with OIDs sorted according to the order of OIDs of
	another BAT from the same relation.  For instance, $X\!\downarrow\! Y$
	returns BAT $X$, whose OIDs have the same order as OIDs of BAT $Y$.
	$X\!\downarrow\!X$ denotes $X$ sorted by its own values.
	
	\subsection{RMA Integration}
	\label{sec:RMA Integration}
	
	As a first step, we have extended the SQL parser to make the
	relational matrix operations available in the from clause of SQL
	\finrev{\cite{RMA_nonanonymus}}.  The syntax \texttt{(r BY U)} the
	specifies ordering for an argument relation $r$.  As an example,
	consider relations $r$ and $s$ and ordering attributes $\mathbf{U}$
	and $\mathbf{V}$.  The unary operation $\opr{\inv}{\mathbf{U}}{}$ and
	the binary operation $\opr{\mmu}{\mathbf{U};\mathbf{V}}{}$ are
	expressed as:
	
	\begin{SQL}
SELECT * FROM inv(r BY $\mathbf{U}$);
SELECT * FROM mmu(r BY $\mathbf{U}$, s BY $\mathbf{V}$);
	\end{SQL}
	
	These basic constructs can be composed to more complex expressions.
	For instance, folding $w5$, $w6$ and $w7$ from
	Figure~\ref{fig:SimilarityComputation} yields the RMA expression:
	\begin{align*}
	\pi_{C,B/(M-1),  H/(M-1), N/(M-1)}&(\\
	\opr{\mmu}{C;U}{}(w4, w3)& \times
	\rho_{M}(\vartheta_{COUNT(*)}(w1)))
	\end{align*}
	
	\smallskip
	\noindent
	The SQL translation of this expression is:
	
	\smallskip
	\begin{SQL}
SELECT C, B/(M-1), H/(M-1), N/(M-1)
FROM mmu(w4 BY C, w3 BY U) AS w5
     cross join
     ( SELECT COUNT(*) AS M FROM w1 ) AS t;		  
	\end{SQL}
	\medskip
	
	Algorithm~\ref{alg:STTREE} processes a node that represents a unary
	relational matrix operation $\opr{\op}{\mathbf{U}}{}(r)$ and
	translates it to a list of BAT expressions.  In lines
	\ref{lst:ssmstart} - \ref{lst:ssmend}, the BATs of relation $r$ are
	split, sorted, and morphed to get BATs $X$ with row origins and BATs
	$Y$ with the application part. \emph{Splitting} (lines
	\ref{lst:ssmstart} and \ref{lst:sortend}) divides a relation into two
	parts: The application part, on which the matrix operations are
	performed, and the contextual information, which gives a meaning to
	the application part.  BATs $B$ are split into application part and
	order part according to $\mathbf{U}$.  \emph{Sorting} (lines
	\ref{lst:sortstart} and \ref{lst:sortend}) determines the order of the
	tuples for a specific matrix operation.  The order schema $\mathbf{U}$
	is used to sort the BATs: BATs in $\mathbf{U}$ are sorted according to
	their values while the other BATs in $B$ are sorted according to the
	OIDs of the BATs in $\mathbf{U}$.  The order is established for each
	operation based on the contextual values in the
	relation. \emph{Morphing} (lines
	\ref{lst:morphstart}-\ref{lst:ssmend}) morphs contextual information
	so that it can be added to the base result.  Finally, the matrix
	operation is applied to $Y$ (line \ref{lst:matrop}).  \emph{Merging}
	(line \ref{lst:return}) combines the result of a matrix operation with
	relevant contextual information and constructs the result relation
	with row and column origins.  Merging and splitting are efficient
	operations that work at the schema level and do not access the data.
	
	\begin{algorithm2e}[ht]
		\small
		\caption{UnaryRMA($\op$,  $\mathbf{U}$, $r$)}
		\label{alg:STTREE}
		$B \gets BATs(r);$
		$Y \gets \{\} $ \;
		$D \gets \{b | b \in B,  b \textit{ is in order schema } \mathbf{U}\} $ \; \label{lst:ssmstart}
		$G \gets sort(D)$ \; \label{lst:sortstart}
		\lFor{$b \in B \setminus D$}{$Y \gets Y \cup b\!\downarrow\!G$} \label{lst:sortend}
		\lIf{ShapeType($\op$) $\in$ \{(r,r), (r,c), (r,1)\}}{$X \gets G$} \label{lst:morphstart}
		\lElseIf{ShapeType($\op$) $\in$ \{(c,r), (c,c)\}}{$X \gets new BAT(Y)$}
		\lElse{$X \gets new BAT(r)$} \label{lst:ssmend}
		$F \gets eval(\op, Y)$\;  \label{lst:matrop}
		\Return $Concat(X,F)$\; \label{lst:return}
	\end{algorithm2e}
	
	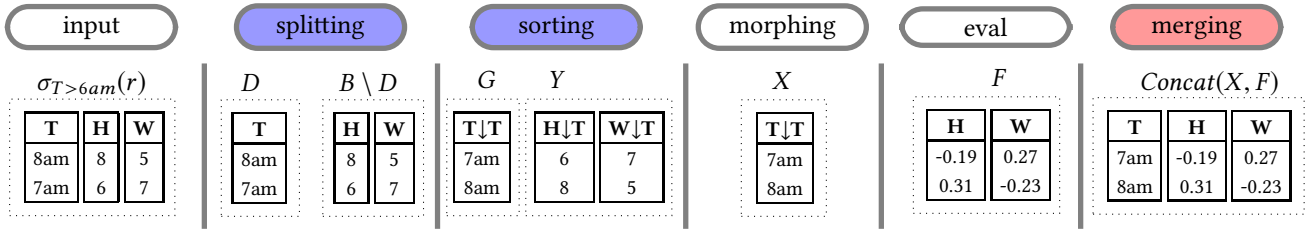
\begin{figure*}[ht]\centering
		\setlength{\tabcolsep}{3pt}
		\begin{tikzpicture}[head/.style={line width=2pt, draw=gray, rounded
			corners=8pt,text width=2cm,align=center}]
		
		\node[head,fill=white] (in) {input};
		\node[fill=white!40,below=0.15cm of in] (al1) {$\sigma_{T>6am}(r)$};
		\node[fill=white!40,below=0.7cm of in] (iv) [] 
		{\fontsize{8}{9}\selectfont
			\setlength{\tabcolsep}{3pt}
			\begin{tabular}{|c|} \hline
			{\bf T} \\ \hline
			8am \\ 
			7am \\ \hline
			\end{tabular}
			\begin{tabular} {|c|} \hline
			{\bf H}   \\ \hline
			8 \\ 
			6 \\ \hline
			\end{tabular}
			\begin{tabular} {|c|} \hline
			{\bf W}   \\ \hline
			5 \\ 
			7 \\ \hline
			\end{tabular}
		};
		\draw [dotted] ($(iv.south west)+(-.00cm,-.05cm)$)
		rectangle ($(iv.north east)+(.02cm,.05cm)$);	
		
		\node[head,fill=blue!40,right=0.74cm of in] (sp) {splitting};
		\node[fill=white!40,below left=0.15cm and -2.3cm of sp] (cd) {$ D \qquad \quad  B \setminus D$};
		\node[fill=white!40,below=0.7cm of sp] (c1d1) [] 
		{\fontsize{8}{9}\selectfont
			\setlength{\tabcolsep}{3pt}
			\begin{tabular}{|c|} \hline
			{\bf T} \\ \hline
			8am \\ 
			7am \\ \hline
			\end{tabular}
			\quad
			\quad
			\begin{tabular} {|c|}
			\hline
			{\bf H}   \\ \hline
			8 \\ 
			6 \\ \hline
			\end{tabular}
			\begin{tabular} {|c|}
			\hline
			{\bf W}   \\ \hline
			5 \\ 
			7 \\ \hline
			\end{tabular}
		};
		\draw [dotted] ($(c1d1.south west)+(.05cm,-.05cm)$)
		rectangle ($(c1d1.north east)+(-1.65cm,.05cm)$);
		\draw [dotted] ($(c1d1.south west)+(1.4cm,-.05cm)$)
		rectangle ($(c1d1.north east)+(.00cm,.05cm)$);
		
		\node[head,fill=blue!40,right=0.83cm of sp] (so) {sorting};
		\node[fill=white!40,below left=0.15cm and -1.4cm of so] (gy) {$G \qquad  Y$};
		\node[fill=white!40,below=0.7cm of so] (g1y1) [] 
		{\fontsize{8}{9}\selectfont
			\setlength{\tabcolsep}{3pt}
			\begin{tabular} {|c|}
			\hline
			{\bf T$\downarrow$T}   \\ \hline
			7am \\ 
			8am \\ \hline
			\end{tabular}
			$\hspace*{5pt}$
			\begin{tabular}{|c|}
			\hline
			{\bf H$\downarrow$T}   \\ \hline
			6 \\ 
			8 \\ \hline
			\end{tabular}
			\begin{tabular} {|c|}
			\hline
			{\bf W$\downarrow$T}   \\ \hline
			7 \\ 
			5 \\ \hline
			\end{tabular}
		};
		\draw [dotted] ($(g1y1.south west)+(.1cm,-.05cm)$)
		rectangle	($(g1y1.north east)+(-2.05cm,.05cm)$);
		\draw [dotted] ($(g1y1.south west)+(1.15cm,-.05cm)$)
		rectangle	($(g1y1.north east)+(-.02cm,.05cm)$);
		
		\node[head,fill=white!40,right=0.7cm of so] (mo) {morphing};
		\node[fill=white!40,below=0.15cm of mo] (x) {$X$};
		\node[fill=white!40,below=0.7cm of mo] (x1) [] 
		{\fontsize{8}{9}\selectfont
			\setlength{\tabcolsep}{3pt}
			\begin{tabular} {|c|}
			\hline
			{\bf T$\downarrow$T}   \\ \hline
			7am \\ 
			8am \\ \hline
			\end{tabular}
		};
		\draw [dotted] ($(x1.south west)+(-.01cm,-.05cm)$)
		rectangle ($(x1.north east)+(.05cm,.05cm)$);
		
		\node[head,fill=white!40,right=0.4cm of mo] (sol) {eval};
		\node[fill=white!40,below =0.15cm of sol] (h) {$\quad F$};
		\node[fill=white!40,below =0.7cm of sol] (h1) [] 
		{\fontsize{8}{9}\selectfont
			\setlength{\tabcolsep}{3pt}
			\begin{tabular} {|c|}
			\hline
			{\bf H}   \\ \hline
			-0.19 \\ 
			0.31 \\ \hline
			\end{tabular}
			\begin{tabular} {|c|}
			\hline
			{\bf W}   \\ \hline
			0.27 \\ 
			-0.23 \\ \hline
			\end{tabular}
		};
		\draw [dotted] ($(h1.south west)+(.02cm,-.05cm)$)
		rectangle ($(h1.north east)+(.0cm,.05cm)$);
		
		\node[head,fill=red!40,right=0.5cm of sol] (me) {merging};
		\node[fill=white!40,below =0.15cm of me] (z) {$\quad Concat(X, F)$};
		\node[fill=white!40,below=0.68cm of me] (re) [] 
		{\fontsize{8}{9}\selectfont
			\setlength{\tabcolsep}{3pt}
			\begin{tabular} {|c|}
			\hline
			{\bf T}   \\ \hline
			7am \\ 
			8am \\ \hline
			\end{tabular}
			\begin{tabular} {|c|}
			\hline
			{\bf H}   \\ \hline
			-0.19 \\ 
			0.31 \\ \hline
			\end{tabular}
			\begin{tabular} {|c|}
			\hline
			{\bf W}   \\ \hline
			0.27 \\ 
			-0.23 \\ \hline
			\end{tabular}
		};
		\draw [dotted] ($(re.south west)+(-.0cm,-.05cm)$)
		rectangle ($(re.north east)+(.05cm,.05cm)$); 
		
		\begin{scope}[line width=2pt,gray]
		\draw ([yshift=-5pt, xshift=1.5cm]in.south) -- +(0,-2.2cm);
		\draw ([yshift=-5pt, xshift=4.6cm]in.south) -- +(0,-2.2cm);
		\draw ([yshift=-5pt, xshift=7.9cm]in.south) -- +(0,-2.2cm);
		\draw ([yshift=-5pt, xshift=10.5cm]in.south) -- +(0,-2.2cm);
		\draw ([yshift=-5pt, xshift=13.15cm]in.south) -- +(0,-2.2cm);
		\end{scope}
		\begin{scope}[every node/.style={single arrow, draw=none,fill=red!50,anchor=west,align=center}]
		\end{scope}
		\end{tikzpicture}
		\vspace{-7pt}
		\caption{Splitting, sorting, morphing, merging for query $v=\opr{\inv}{T}{}(\sigma_{T>6am}(r))$}
		\label{fig:STTree}
	\end{figure*}
	
	\begin{example}
		Figure \ref{fig:STTree} illustrates Algorithm \ref{alg:STTREE} for
		$v=\opr{\inv}{T}{}(\sigma_{T>6am}(r))$.  \emph{Splitting}: input
		list $B = (T,H,W)$ is split into order list $D = (T)$ and
		application list $ B \setminus D = (H,W)$.  \emph{Sorting}: BAT $T$ is sorted,
		producing $G$.  Then, $(H,W)$ are sorted according to $G$ returning
		$(H\downarrow T,W\downarrow T)$.  \emph{Morphing}: Since $\inv$ is
		of shape type (r$_1$,c$_1$), row contextual information is the order
		schema: $X = T\downarrow T$.  \emph{Merging}: BATs $X$ are
		concatenated with BATs $F$ to form the result.
	\end{example}
	
	\subsection{Computing the Base Result}
	\label{sec:impl-matr-oper}
	
	Line~\ref{lst:matrop} of Algorithm~\ref{alg:STTREE} calls the procedure that
	computes the matrix operation.  The computation can be done either in the kernel
	of \mbox{MonetDB} or by calling an external library (e.g., MKL~\cite{mkl}).
	Calling an external library requires copying data from BATs to the external
	format and copying the result back. The query optimizer decides about external
	library calls based on the complexity of the operation, the amount of data to be
	copied, and the relative performance of the matrix operation in \mbox{MonetDB}
	compared to the external library.
	
	The no-copy implementation of matrix operations in the kernel of
	\mbox{MonetDB} is performed over BATs directly.  Essentially, standard
	algorithms must be reduced to BAT operations.  The process of reducing
	is highly dependent on the operation.  The goal is to design
	algorithms that access entire columns and minimize accesses to single
	elements of BATs.  To achieve this standard value-based algorithms
	must be transformed to vectorized BAT operations.
	
	\begin{algorithm2e}[ht]
		\small
		\caption{$\invMA$($B$)}
		\label{alg:INV}
		$n \gets B.length;$ 
		$BR \gets IDmatrix(n)$\;
		\For{$i = 1$ \KwTo $n$}{
			$v_1 \gets sel(B_i, i);$
			$B_i \gets B_i/v_1;$ 
			$BR_i \gets BR_i/v_1$\; \label{lst:batop}
			\For{$j = 1$ \KwTo $n$}{
				\If{$i \neq j$}{
					$v_2 \gets sel(B_j,i);$
					$B_j \gets B_j - B_i * v_2$\;
					$BR_j \gets BR_j - BR_i * v_2$\;
				}
			}
		}
		\Return $BR$\;
	\end{algorithm2e}
	
	Algorithm~\ref{alg:INV} illustrates the reduction for the Gauss Jordan
	elimination method for the $\invMA$ computation.  The algorithm takes
	a list of BATs $B = (B_1, B_2, .., B_n)$ and returns the inversion as
	a list of BATs $BR$ of the same size. Function $IDmatrix(n)$ creates a
	list of BATs that represents the identity matrix of size $n \times n$.
	The selection operation $sel(B,i)$ returns the $i\textsuperscript{th}$
	value in $B$.  With the exception of the $sel$ operation, all
	operations are standard \mbox{MonetDB} BAT operations that are also
	used for relational queries.  For example, the operation on
	$B_i \gets B_i/v_1;$ divides each element of a BAT with a scalar
	value.

	\section{Performance Evaluation}
	\label{sec:Performance Evaluation}
	
	\rev{\paragraph{Setup} All runtimes are averages over 3 runs on an
		Intel(R) Xeon(R)\,E5-2603 CPU, 1.7\,GHz, 12\,cores, no
		hyper-threading, 98\,GB RAM (L1: 32+32K, L2:256K, L3:15360K),
		Debian\,4.9.189.}
	
	\paragraph{Competitors.} We empirically compare the implementations of
	our relational matrix algebra (\emph{RMA+}\footnote{\finrev{The implementation can be found here: \url{https://github.com/oksdolm/RMA}}})
	with the statistical package \emph{R}, the array database
	\emph{SciDB}, and two state-of-the-art in-database solutions,
	\emph{AIDA}~\cite{AIDA} and \emph{MADlib} \cite{MADlib}.
	(1) \rev{We implemented \emph{RMA+} in \mbox{MonetDB} (v11.29.8) with
		two options for matrix operations: (a) BATs (RMA+BAT): No-copy
		implementation in the kernel of MonetDB; (b) MKL (RMA+MKL): Copy
		BATs to an \mbox{MKL} (v2019.5.281)~\cite{mkl} compatible format
		(contiguous array of doubles), then copy the result back to BATs. We
		execute linear operations ($\add$, $\sub$, $\emu$) on BATs and use
		MKL for more complex operations. When the matrices do not fit into
		main memory we switch to BATs.  Due to the full integration of RMA+,
		\mbox{MonetDB} takes care of core usage and work distribution, and
		all cores are used for relational and for matrix operations.}
	\rev{
		(2) \emph{SciDB}~\cite{architecture_SciDB} uses an array data model, and queries are expressed in the high-level, declarative language AQL (Array Query Language)~\cite{SciDBUserGuide}. SciDB uses all available cores.}
	(3) \emph{AIDA} is a state-of-the-art solution for the integration of matrix
	operations into a relational database and was shown to outperform other
	solutions like Spark or the pandas library for Python~\cite{AIDA}.  AIDA
	executes matrix operations in Python and offers a Python-like syntax for
	relational operations, which are then translated into SQL and executed in
	\mbox{MonetDB} (v11.29.3). \rev{AIDA uses all cores both in MonetDB  and in
		Python.} We also integrate our solution into \mbox{MonetDB}, which makes AIDA a
	particularly interesting competitor.
	(4) MADlib~\cite{MADlib} (v1.10) provides a collection of UDFs for
	PostgreSQL (v9.6) for in-database matrix and statistical
	calculations. \rev{MADlib does not use multiple cores, which affects
		its overall performance.}
	(5) The \emph{R} package (v3.2.3) is highly tuned for matrix
	operations and is a representative of a non-database solution. R
	performs all relational operations with data.tables structures and
	transforms the relevant columns to matrices to compute the matrix
	operations. An alternative approach to use character matrices for all
	operations is very inefficient (cf.\ Section~\ref{sec:Data
		Transformation}). \rev{R uses all cores for matrix operations but
		runs relational operations on a single core.}
	
	\emph{Data.}  BIXI \cite{Bixi} stores trips and stations of Montreal’s
	public bicycle sharing system, years 2014-2017. DBLP \cite{DBLP}
	stores authors with their publication counts per conference as well as
	conference rankings.  The synthetic dataset used in the experiment to
	measure the effect of sparsity includes values between 0 and
	5,000,000.  All other synthetic datasets include real-valued numeric
	attributes with uniformly distributed values between 0 and 10,000.
	
	\subsection{Maintaining Contextual Information}
	\label{sec:Contextual Information}
	
	A salient feature of our approach is that contextual information is
	maintained during matrix operations.
	We analyze the scalability of maintaining context and study an
	optimization that avoids sorting.
	To this end, we generate relations with a single application column and
	an increasing number of order columns. We compute $\add$ and
	$\qqr$ on these relations. Since $\add$ and $\qqr$ are inexpensive for
	single column matrices, the main cost is the maintenance of the order
	part.
	
	To handle contextual information we split, sort, morph, and merge 
	lists of BATs (cf.\ Section~\ref{sec:RMA Integration}). Sorting
	is the most expensive operation. Fortunately, sorting is not always necessary.
	For example, permuting the input rows for the $\qqr$ operation
	will affect the order of the result rows, but will not change their values.
	Therefore, sorting is not required. In element-wise operations like $\add$,
	$\emu$, or $\sol$, only the relative order of the rows in the two input
	relations matters. Thus, only the order part of the second relation
	requires sorting (to get the same order). 
	
	Figure~\ref{fig:handling-contextual-info} shows the results. (1)
	Handling contextual information is efficient and scales to large
	numbers of attributes.  (2) The optimized operators that (partially)
	avoid sorting clearly outperform their non-optimized counterparts.
	
	\begin{figure}[htbp] \centering
		\begin{subfigure}{.48\linewidth}
			\begin{tikzpicture}[xscale=0.45, yscale=0.4]
			\begin{axis}[xlabel = \#attributes in order scpecification,
			ylabel = Runtime (sec),
			ymax = 4,
			legend pos = north west,
			x label style = {at={(axis description cs:0.5,0.01)}},
			y label style = {at={(axis description cs:.07,.5)}},
			legend style = {nodes={scale=1.2, transform shape}},
			legend cell align={left}, tick label style = {scale=1.5}, label style =
			{scale=1.5}, mark options = {solid,fill=white} ]
			\addplot [mark=diamond, dashed, mark size=3pt, line width=1pt,color=black!30!green]
			table [x=C,y=ADDM100K]{new_MDB_descriptive_part.dat};
			\addlegendentry{$\add$};
			\addplot [mark=star, mark size=3pt, line width=1pt,color=blue]
			table [x=C,y=M100K]{new_MDB_descriptive_part.dat};
			\addlegendentry{$\qqr$};
			\addplot[mark=triangle, dashed, mark size=3pt, line width=1pt,color=black!30!green]
			table [x=C, y=ADDM100KNO]{new_MDB_descriptive_part.dat};
			\addlegendentry{$\add$, relative sorting};
			\addplot[mark=square, mark size=3pt, line width=1pt,color=blue]
			table [x=C, y=M100KNO]{new_MDB_descriptive_part.dat};
			\addlegendentry{$\qqr$, w/o sorting};
			\end{axis}
			\end{tikzpicture}
			\caption{100K tuples}
			\label{fig:DPO100K}
		\end{subfigure}
		\begin{subfigure}{.48\linewidth}
			\begin{tikzpicture}[xscale=0.45, yscale=0.4]
			\begin{axis}[ xlabel = \#attributes in order schema,
			ylabel = Runtime (sec),
			ymax = 4,
			legend pos = north west,
			x label style = {at={(axis description cs:0.5,0.01)}},
			y label style = {at={(axis description cs:.08,.5)}},
			legend style = {nodes={scale=1.2, transform shape}},
			legend cell align={left},
			tick label style = {scale=1.5},
			label style = {scale=1.5},
			mark options = {solid,fill=white} ]
			\addplot[mark=diamond, dashed, mark size=3pt, line width=1pt, color=black!30!green]
			table [x=C, y=ADD1M]{MDB_desc_par_add_qqr.dat};
			\addlegendentry{$\add$};
			\addplot[mark=triangle, dashed, mark size=3pt, line width=1pt, black!30!green]
			table [x=C, y=ADD1MRO]{MDB_desc_par_add_qqr.dat};
			\addlegendentry{$\add$, relative sorting};
			\addplot [mark=star, mark size=3pt, line width=1pt, color=blue]
			table [x=C,y=QQR1M]{MDB_desc_par_add_qqr.dat};
			\addlegendentry{$\qqr$};
			\addplot [mark=square, mark size=3pt, line width=1pt, color=blue]
			table [x=C,y=QQR1MNO]{MDB_desc_par_add_qqr.dat};
			\addlegendentry{$\qqr$, w/o sorting};
			\end{axis}
			\end{tikzpicture}
			\caption{1M tuples}
			\label{fig:DPO1M}
		\end{subfigure}
		\caption{Handling contextual information}
		\label{fig:handling-contextual-info}
	\end{figure}
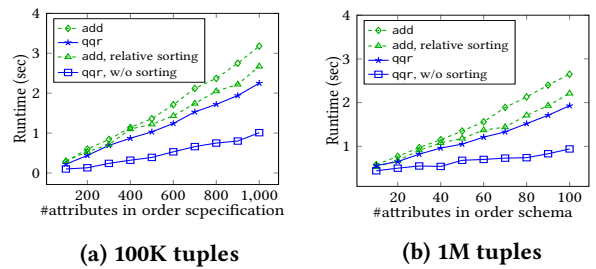
	
	Note that a number of operations ($\cpd$, $\sol$, $\rqr$, $\dsd$,
	$\tra$, $\det$, $\rnk$) do not preserve row context since the number
	of rows changes.
	Instead, a single column with predefined values (operation name or
	attribute names of the application schema) is created, which is
	negligible in the overall runtime.
	
	\subsection{Wide and Sparse Relations}
	\label{sec:Relations with Many Attributes and Sparse Relations}
	
	\rev{\paragraph{Wide relations.} Current databases scale better in the
		number of tuples than in the number of attributes. We test our RMA+
		implementation in MonetDB on wide relations. We generate relations
		with 1000 tuples, one order attribute, and a varying number of
		application attributes. In Table~\ref{tbl:ADD10K}, we increase the
		number of attributes from 1K to 10K and measure the runtime of the
		$\add$ operation. MonetDB can handle wide relations with several
		thousands of attributes, even though the runtime per column
		increases with the attribute number.
		
		\begin{table}[ht] \centering \setlength{\tabcolsep}{3pt}
			\caption{ $\add$ over wide relations in RMA+}
			\label{tbl:ADD10K}
			\scriptsize\vspace{-5pt}
			\begin{tabular} {|c|c|c|c|c|c|c|c|c|c|c|} \hline%
				{\bf \#attr} & \bf 1K & \bf 2K & \bf 3K & \bf 4K & \bf 5K & \bf 6K & \bf 7K & \bf 8K & \bf 9K & \bf 10K
				\\ \hline {\bf sec} & 0.6 & 2.2 & 4.8 & 8.8 & 13.4 & 20 & 27 & 36 & 47 & 62 \\
				\hline
			\end{tabular}
		\end{table}
		
		\paragraph{Sparse relations.} We analyze the effect of MonetDB's
		built-in compression on relations with many zeros. We add two
		relations (5M tuples, one order, 10 application attributes) with
		uniformly distributed non-zero values (range 1-5M). In
		Table~\ref{tbl:ADDSPARSE} we increase the percentage of zero values
		(position of zeros is random) and measure the runtime: The $\add$
		operation on sparse matrices is up to two times faster than the same
		operation on dense matrices.  Thus, RMA+ leverages MonetDB's
		compression features.
		
		\begin{table}[ht] \centering
			\setlength{\tabcolsep}{2pt}
			\caption{$\add$ over sparse relations in RMA+}
			\label{tbl:ADDSPARSE}
			\scriptsize\vspace{-5pt}
			\begin{tabular} {|c|c|c|c|c|c|c|c|c|c|c|c|} \hline%
				{\bf \%} &  \bf 0 & \bf 10 & \bf 20 & \bf 30 & \bf 40 & \bf 50 & \bf 60 & \bf 70 & \bf 80 & \bf 90 & \bf 100 \\ \hline
				{\bf sec} & 1.68 & 1.60 & 1.49  & 1.41 & 1.33 & 1.25 & 1.16 & 0.99 & 0.94 & 0.89 & 0.76 \\ \hline 
			\end{tabular}
		\end{table}
		
	}
	
	\subsection{RMA+ vs.\ Non-Database Approaches}
	\label{sec:R and MonetDB kernel relative performance}
	
	We study the scalability of RMA+ to large relations and compare to $R$
	as a non-database solutions for matrix operations. In Table
	\ref{tbl:5M50M} we measure the runtime for $\qqr$ on tables with
	\rev{up to 100M tuples} and 70 attributes in the application
	schema. \rev{For relations up to a size of 50Mx40, RMA+ delegates the
		matrix computation to MKL; the runtime includes copying the
		data. RMA+ is consistently faster than R since MKL can better
		leverage the hardware. R fails for sizes above 50Mx40 since it runs
		out of memory. In RMA+ we switch to the BAT implementation, which
		leverages the memory management of MonetDB. The Gram-Schmidt $\qqr$
		baseline~\cite{qr_alg} that we implemented over BATs is slower than
		the MKL algorithm (e.g., 834 vs.\ 61.4\,sec for 50Mx40), which
		explains the increase in runtime. RMA+ scales to large relations
		that do not fit into memory (e.g., relation size 100Mx70 requires
		56GB).}

	\begin{table}[htbp] \centering
		\caption{Runtimes of $\qqr$ in seconds in R and RMA+}
		\label{tbl:5M50M}
		\scriptsize\vspace{-5pt}
		\begin{tabular} {|c|c|c|c|c|c|c|} \hline%
			{} & \multicolumn{2}{c|}{\bf 10 attr}
			& \multicolumn{2}{c|}{\bf 40 attr} & \multicolumn{2}{c|}{\bf 70 attr} \\
			\hline
			{\bf System} & \bf R & \bf RMA+ & \bf R & \bf RMA+ & \bf R & \bf RMA+ \\ \hline
			{\bf 5M tup} & 3.5 & 2.1  & 20 & 6.6 & 47 & 11.6 \\ \hline 
			{\bf 50M tup} & 37 & 21.3 & 221 & 61.4 & fail  & 2018 \\ \hline
			{\bf 100M tup} & 74 & 40 & fail & 1690 & fail  & 4064 \\ \hline
		\end{tabular}
	\end{table}
	
	\subsection{RMA+ vs.\ Array Databases}
	
	We study the performance of RMA+ vs.\
	SciDB~\cite{architecture_SciDB} as a representative of array
	databases.  We compute $\add$ on two matrices with 10 columns and a
	varying number of rows, followed by a selection
	\footnote{We run this experiment on Ubuntu 14.04 since SciDB does
		not support Debian; Ubuntu runs on a server with 4 cores and 16GB
		of RAM.}.
	The resulting runtimes for Ubuntu are shown in Table~\ref{tbl:RMA+SciDB}. RMA+ outperforms SciDB by more than an order of magnitude. RMA+ performs addition directly over pairs of relations, while SciDB must compute a so-called array join~\cite{SciDBUserGuide} over the input arrays in order to add their
	values.
	
	\setlength{\tabcolsep}{3pt}
	\begin{table}[htbp] \centering
		\caption{$\add$ followed by a selection: RMA+ vs.\ SciDB}
		\vspace{-5pt}
		\label{tbl:RMA+SciDB}
		\scriptsize
		\begin{tabular} {|c|c|c|c|c|} \hline%
			{\bf \#tuples} & \bf 1M & \bf 5M & \bf 10M & \bf 15M  \\ \hline
			{\bf RMA+} & 4.6s & 24.4s  & 1m18s & 1m39s  \\ \hline 
			{\bf SciDB} & 1m21s & 7m6s  & 13m2s & 18m23s  \\ \hline 
		\end{tabular}
	\end{table}

	\subsection{Overhead of Data Transformation}
	\label{sec:Data Transformation}
	
	We investigate the overhead of data transformation for various matrix
	operations in a mixed relational/matrix scenario.
	
	\rev{RMA+ is free to execute matrix operations directly on BATs or rearrange the
		numerical data in main memory and delegate the matrix operations to specialized
		packages like MKL~\cite{mkl}.}
	R does not enjoy this flexibility: R uses the matrix data type for
	matrix operations and the data.tables storage structure for relational
	operations.  While data.tables supports simple linear operations like
	linear model construction, the data must be transformed to the matrix
	type for more complex operations like $\cpdMA$, $\opdMA$, or
	$\mmuMA$. Matrices cannot store a mix of numerical and non-numerical
	values, which is required when working with tables; R offers character
	matrices, but they are very inefficient, e.g., joining trips and
	stations in the BIXI dataset takes 40\,sec for the character matrix type
	and less than 2\,sec
	for data.tables. 
	
	Figure~\ref{fig:COFPDTR} shows the percentage of time spent for data
	transformations on relations with 50 columns and a varying number of
	rows (100k to 500k). For R we measure the time of transforming the
	relation from data.table to matrix and back as a percentage of the
	overall query time, which includes the actual matrix operation.  For
	RMA+ we measure the time share for copying the data from a list of
	BATs to a contiguous, one-dimensional array for MKL, and for copying
	the result back; the overall runtime in addition includes the matrix
	computation in MKL (but excludes the \mbox{MonetDB} query pipeline of
	query parsing, query tree creation, etc.).
	
	\begin{figure}[!htb] \centering \setlength{\tabcolsep}{1.2pt}
		{\fontsize{8}{9}\selectfont
			\begin{subfigure}{0.48\linewidth}
				\begin{tabular} {r|c|c|c|c|c|c|}
					\multicolumn{1}{l|}{\#rows} & \multicolumn{6}{c}{(\#columns = 50)} \\
					\hline	
					500K & \cellcolor{gray!81}81   & \cellcolor{gray!75}75 & \cellcolor{gray!64}64 & \cellcolor{gray!21}21 & \cellcolor{gray!7}7 & \cellcolor{gray!7}7 \\ \hline
					300K &\cellcolor{gray!79}79 & \cellcolor{gray!77}77  & \cellcolor{gray!63}63 & \cellcolor{gray!21}21 & \cellcolor{gray!7}7 & \cellcolor{gray!7}7 \\ \hline
					100K & \cellcolor{gray!84}84 & \cellcolor{gray!74}74 & \cellcolor{gray!69}69 & \cellcolor{gray!23}23 & \cellcolor{gray!9}9 & \cellcolor{gray!10}10 \\ \hline
					& $\addMA$ & $\emuMA$ & $\mmuMA$ & $\qqrMA$ & $\dsdMA$ & $\vsdMA$ 
				\end{tabular}
				\caption{Data.table and matrix}\label{fig:COPYR}
			\end{subfigure}
			\begin{subfigure}{0.48\linewidth}
				\begin{tabular} {r|c|c|c|c|c|c|}
					\multicolumn{1}{l|}{\#rows} & \multicolumn{6}{c}{(\#columns = 50)} \\
					\hline	
					500K & \cellcolor{gray!92}92  & \cellcolor{gray!92}92
					& \cellcolor{gray!86}86  & \cellcolor{gray!53}53
					& \cellcolor{gray!44}44  & \cellcolor{gray!43}43 \\ \hline
					300K  & \cellcolor{gray!91}91  & \cellcolor{gray!91}91
					& \cellcolor{gray!86}86  & \cellcolor{gray!55}55 
					& \cellcolor{gray!45}45  & \cellcolor{gray!40}40 \\ \hline
					100K  & \cellcolor{gray!86}86  & \cellcolor{gray!86}86
					& \cellcolor{gray!80}80  & \cellcolor{gray!48}48
					& \cellcolor{gray!37}37  & \cellcolor{gray!35}35 \\ \hline
					& $\addMA$ & $\emuMA$ & $\mmuMA$ & $\qqrMA$ & $\dsdMA$ & $\vsdMA$
				\end{tabular}
				\caption{List of BATs and 1D array}\label{fig:RMALCOPY}
			\end{subfigure}
		}
		\vspace{-5pt}
		\caption{Data transformation share: (a) R, (b) RMA+}
		\label{fig:COFPDTR}
	\end{figure}
	
	Clearly, the overhead of transforming data matters for both R and
	RMA+. We draw the following conclusions: 
	(a) Transforming data between data structures is costly.
	(b) For simple operations like $\addMA$ and $\emuMA$, the transformation
	overhead dominates the overall runtime \rev{(up to 92\%)}.
	(c) For complex operations,  the performance of the matrix operation dominates
	the overall runtime.
	
	\subsection{Efficiency for Mixed Workloads}
	\label{sec:mixed-workloads}
	
	We analyze four workloads that require a mix of relational operations and matrix
	operations, and we compare our implementation of RMA (RMA+) to its competitors
	(R, AIDA, MADlib). The workloads stem from applications on our real-world
	datasets and differ in the complexity of relational vs.\ matrix part. On the
	BIXI dataset, we compute (1) the linear regression between distance and duration
	for individual trips, and (2) journeys connecting up to 5 trips; on DBLP we
	compute the (3) covariance between conferences based on the publication counts
	per conference and author; (4) on a synthetic dataset based on BIXI we
	count trips per rider.

	\paragraph{(1) Trips -- Ordinary Linear Regression}
	
	Trips in BIXI include start date and start station, end date and end station,
	duration, and a membership flag for the rider; stations have a code, a name, and
	coordinates. At the level of relations, we need to perform the following data
	preparation steps: (a) Aggregate the trips and select those trips that were
	performed at least 50 times; (b) join trips and stations to retrieve the station
	coordinates and compute the distance. We use the OLS method~\cite{LLS} to
	compute the linear regression between distance and duration.  OLS uses cross
	product, matrix multiplication, and inversion:
	$\mmuMA(\invMA(\cpdMA(A,A)),\cpdMA(A,V))$, where $A$ is the matrix with the
	independent variables, and $V$ is the vector with the dependent variable.
	
	Figure \ref{fig:LinReg_systems} shows the runtime results for trips reported
	in the years 2014 (3.1M trips), 2014-2015 (6.1M trips), 2014-2016
	(10.5M trips), and 2014-2017 (14.5M trips), respectively. \rev{The
		input data consists of numeric and non-numeric types such as date
		and time.}  We break the runtime down into data preparation (solid
	area of the bar) and matrix computation time (dashed light area) for
	RMA+, R, and AIDA; \rev{for R we also show the load time from a CSV
		file (dark area). RMA+ and AIDA outperform R and MADlib in all
		scenarios. R performs poorly on the relational operations of the
		data preparation step: The join implementation of R does not
		leverage multiple cores, and R lacks a query optimizer, which
		adversely affects the relational performance.
		MADlib is outperformed by all other solutions due to the slow
		computation of the linear regression.
		RMA+ outperforms AIDA on all datasets. Although both RMA+ and AIDA compute the
		relational operations in MonetDB, RMA+ is up to 6.3  times faster: While AIDA passes pointers to access numerical Python data in
		MonetDB, this does not work for other data types (e.g., date, time,  string) due
		to different storage formats~\cite{AIDA_SSDBM}. Therefore, expensive data
		transformations must be applied.} 
	
	\begin{figure}[htbp] \centering	
		\begin{subfigure}{.48\linewidth}
			\begin{tikzpicture}[
			xscale=0.35,
			yscale=0.3,
			every axis/.style={
				width=11cm,
				ybar stacked,
				ymin=0,ymax=27,
				symbolic x coords={3.1,6.5,10.5,14.5},
				bar width=9pt,
				xtick={3.1,6.5,10.5,14.5},  
				label style={font=\Huge},
				tick label style={font=\Huge}
			},
			]
			
			\begin{axis}
			[bar shift=-15pt,
			hide axis,
			legend style={font=\Huge},
			legend style={at={(0.32,0.98)}},
			legend style={draw=none}]
			\addplot+[draw=none,fill=cyan]
			table [x=C,y=RMA_join_c]{BS_unlinreg_RMA.dat};
			\addlegendentry{RMA+};
			\addplot+[draw=none,fill=cyan!50!white,
			postaction={pattern=north east lines}]
			table [x=C,y=RMA_lapack_separ]{BS_unlinreg_RMA.dat};
			\addplot+[draw=none,fill=cyan!50!white,
			postaction={pattern=north east lines}]
			table [x=C,y=MKL]{BS_unlinreg_MATH_Pack.dat};
			\end{axis}
			
			\begin{axis}[bar shift=-5pt, 
			hide axis,
			legend style={font=\Huge},
			legend style={at={(0.55,0.98)}},
			legend style={draw=none}]
			
			\addplot+[draw=none,fill=red]
			table [x=C,y=AIDA_join]{BS_unlinreg_AIDA.dat};
			\addlegendentry{AIDA};
			\addplot+[draw=none,fill=red!50!white,
			postaction={pattern=north east lines}]
			table [x=C,y=AIDA_linreg]{BS_unlinreg_AIDA.dat};
			\end{axis}
			
			\begin{axis}[bar shift=5pt,
			legend style={at={(0.65,0.98)}},
			legend style={font=\Huge},
			legend style={draw=none},
			xlabel = \#tuples (M),
			ylabel = Runtime (sec),
			label style = {font=\Huge},
			y label style = {at={(axis description cs:.01,.5)}}]
			\addplot+[forget plot,draw=none,fill=black!60!green]
			table [x=C,y=R_load]{BS_unlinreg_R.dat};
			\addplot+[draw=none,fill=black!30!green]
			table [x=C,y=R_join]{BS_unlinreg_R.dat};
			\addlegendentry{R};
			\addplot+[draw=none,fill=black!30!green!50!white,
			postaction={pattern=north east lines}]
			table [x=C,y=R_OLS]{BS_unlinreg_R.dat};
			\end{axis}

			\begin{axis}
			[bar shift=15pt,
			hide axis,
			legend style={font=\Huge},
			legend style={at={(0.95,0.98)}},
			legend style={draw=none}]
			\addplot+[draw=none,fill=black!30!brown]
			table [x=C,y=MADlib_join]{BS_unlinreg_MADlib.dat};
			\addlegendentry{MADlib};
			\addplot+[draw=none, fill=black!30!brown!50!white,
			postaction={pattern=north east lines}]
			table [x=C,y=MADlib_linreg]{BS_unlinreg_MADlib.dat};
			\end{axis}
			
			\end{tikzpicture}
			\caption{System Comparison}\label{fig:LinReg_systems}
		\end{subfigure}
		\begin{subfigure}{.48\linewidth}
			\begin{tikzpicture}[
			xscale=0.35,
			yscale=0.3,
			every axis/.style={
				width=11cm,
				ybar stacked,
				ymin=0,ymax=5,
				symbolic x coords={3.1,6.5,10.5,14.5},
				bar width=9pt,
				xtick={3.1,6.5,10.5,14.5}, 
				label style={font=\huge},
				tick label style={font=\huge} 
			},
			]
			
			\begin{axis}
			[bar shift=-5pt,
			hide axis,
			legend style={font=\huge},
			legend style={at={(0.45,0.98)}},
			legend style={draw=none}]
			\addplot+[draw=none,fill=cyan]
			table [x=C,y=RMA_join_c]{BS_unlinreg_RMA.dat};
			\addlegendentry{RMA+MKL};
			\addplot+[draw=none,fill=cyan!50!white,
			postaction={pattern=north east lines}]
			table [x=C,y=RMA_lapack_separ]{BS_unlinreg_RMA.dat};
			\addplot+[draw=none,fill=cyan!50!white,
			postaction={pattern=north east lines}]
			table [x=C,y=MKL]{BS_unlinreg_MATH_Pack.dat};
			\end{axis}
			
			\begin{axis}[bar shift=5pt,
			legend style={font=\Huge},
			legend style={at={(0.85,0.98)}},
			legend style={draw=none},
			xlabel = \#tuples (M),
			ylabel = Runtime (sec),
			label style = {font=\Huge},
			y label style = {at={(axis description cs:.05,.5)}}]
			\addplot+[draw=none,fill=black]
			table [x=C,y=RMA_join_c]{BS_unlinreg_RMA.dat};
			\addlegendentry{RMA+BAT};
			\addplot+[draw=none,fill=black!50!white,
			postaction={pattern=north east lines}]
			table [x=C,y=RMA_linreg_separ]{BS_unlinreg_RMA.dat};
			\end{axis}
			\end{tikzpicture}
			\caption{RMA+BAT vs RMA+MKL}\label{fig:LinReg_BATvsMKL}
		\end{subfigure}
		\vspace{-5pt}
		\caption{Trips (Ordinary Linear Regression)}\label{fig:LinReg}\vspace{-10pt}
	\end{figure}
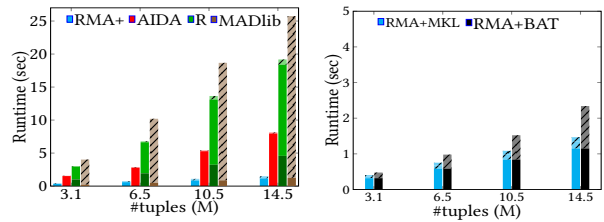

	\paragraph{(2) Journeys -- Multiple Linear Regression}
	\label{sec:mult-line-regr}
	
	We compose trips that meet in a station into journeys. \rev{We start
		from 15M one-trip journeys of the form (start station, end station,
		duration); all attributes are numerical. During data preparation, we
		perform joins to create journeys of up to five trips, select those
		that appear at least 50 times, and join stations with their
		coordinates to compute the distances between subsequent stations in
		a journey.} At the matrix level, we do a multiple linear regression
	analysis with the distances as independent variables and the overall
	duration as the dependent variable.
	
	Figure \ref{fig:MulReg_systems} shows the runtime for journey lengths of 1
	to 5 trips (i.e., 1 to 5 independent variables).  The solid part is
	the time for data preparation (relational operations); the dashed
	light part is the time for multiple linear regression (matrix
	operations).
	\rev{RMA+ and AIDA again outperform R on the relational part of the
		query. The relational part operates on purely numerical data and
		AIDA shows comparable join performance to RMA+. MADlib spends about two
		third of the relational runtime on distance computations and is
		therefore slower than its competitors also on the relational part.}
	
	\begin{figure}[htbp] \centering	
		\begin{subfigure}{.48\linewidth}
			\begin{tikzpicture}[
			xscale=0.35,
			yscale=0.3,
			every axis/.style={
				width=11cm,
				ybar stacked,
				ymin=0,ymax=215,
				symbolic x coords={1,2,3,4,5},
				bar width=9pt,
				xtick={1,2,3,4,5},  
				label style={font=\Huge},
				tick label style={font=\Huge} 
			},
			]
			
			\begin{axis}
			[bar shift=-15pt,
			hide axis,
			legend style={font=\Huge},
			legend style={at={(0.32,0.98)}},
			legend style={draw=none}]
			\addplot+[draw=none,fill=cyan]
			table [x=C,y=RMA_join_c]{BS_mlinreg_RMA.dat};
			\addlegendentry{RMA+};
			\addplot+[draw=none,fill=cyan!50!white,
			postaction={pattern=north east lines}]
			table [x=C,y=RMA_lapack_separ]{BS_mlinreg_RMA.dat};
			\addplot+[draw=none,fill=cyan!50!white,
			postaction={pattern=north east lines}]
			table [x=C,y=MKL]{BS_mlinreg_MATH_Pack.dat};
			\end{axis}

			\begin{axis}
			[bar shift=-5pt,
			hide axis,
			legend style={font=\Huge},
			legend style={at={(0.55,0.98)}},
			legend style={draw=none}]
			\addplot+[draw=none,fill=red]
			table [x=C,y=AIDA_join]{BS_mlinreg_AIDA.dat};
			\addlegendentry{AIDA};
			\addplot+[draw=none,fill=red!50!white,
			postaction={pattern=north east lines}]
			table [x=C,y=AIDA_linreg]{BS_mlinreg_AIDA.dat};
			\end{axis}

			\begin{axis}[bar shift=5pt,
			legend style={font=\Huge},
			legend style={at={(0.65,0.98)}},
			legend style={draw=none},
			xlabel = \#trips,
			ylabel = Runtime (sec),
			label style = {font=\Huge},
			y label style = {at={(axis description cs:-.02,.5)}}]
			\addplot+[forget plot,draw=none,fill=black!60!green]
			table [x=C,y=R_load]{BS_mlinreg_R.dat};
			\addplot+[draw=none,fill=black!30!green]
			table [x=C,y=R_join]{BS_mlinreg_R.dat};
			\addlegendentry{R};
			\addplot+[draw=none,fill=black!30!green!50!white,
			postaction={pattern=north east lines}]
			table [x=C,y=R_OLS]{BS_mlinreg_R.dat};
			\end{axis}
			
			\begin{axis}
			[bar shift=15pt,
			hide axis,
			legend style={font=\Huge},
			legend style={at={(0.95,0.98)}},
			legend style={draw=none}]
			\addplot+[draw=none,fill=black!30!brown]
			table [x=C,y=MADlib_join]{BS_mlinreg_MADlib.dat};
			\addlegendentry{MADlib};
			\addplot+[draw=none,fill=black!30!brown!50!white,
			postaction={pattern=north east lines}]
			table [x=C,y=MADlib_linreg]{BS_mlinreg_MADlib.dat};
			\end{axis}
			
			\end{tikzpicture}
			\caption{System Comparison}\label{fig:MulReg_systems}
		\end{subfigure}
		\begin{subfigure}{.48\linewidth}
			\begin{tikzpicture}[
			xscale=0.35,
			yscale=0.3,
			every axis/.style={
				width=11cm,
				ybar stacked,
				ymin=0,ymax=10,
				symbolic x coords={1,2,3,4,5},
				bar width=9pt,
				xtick={1,2,3,4,5},  
				label style={font=\Huge},
				tick label style={font=\Huge}  
			},
			]
			
			\begin{axis}
			[bar shift=-5pt,
			hide axis,
			legend style={font=\huge},
			legend style={at={(0.45,0.98)}},
			legend style={draw=none}]
			\addplot+[draw=none,fill=cyan]
			table [x=C,y=RMA_join_c]{BS_mlinreg_RMA.dat};
			\addlegendentry{RMA+MKL};
			\addplot+[draw=none,fill=cyan!50!white,
			postaction={pattern=north east lines}]
			table [x=C,y=RMA_lapack_separ]{BS_mlinreg_RMA.dat};
			\addplot+[draw=none,fill=cyan!50!white,
			postaction={pattern=north east lines}]
			table [x=C,y=MKL]{BS_mlinreg_MATH_Pack.dat};
			\end{axis}

			\begin{axis}
			[bar shift=5pt,
			legend style={font=\Huge},
			legend style={at={(0.85,0.98)}},
			legend style={draw=none},
			xlabel = \#trips,
			ylabel = Runtime (sec),
			label style = {font=\Huge},
			y label style = {at={(axis description cs:.03,.5)}}]
			\addplot+[draw=none,fill=black]
			table [x=C,y=RMA_join_c]{BS_mlinreg_RMA.dat};
			\addlegendentry{RMA+BAT};
			\addplot+[draw=none,fill=black!50!white,
			postaction={pattern=north east lines}]
			table [x=C,y=RMA_linreg_separ]{BS_mlinreg_RMA.dat};
			\end{axis}

			\end{tikzpicture}
			\caption{RMA+BAT vs RMA+MKL}\label{fig:MulReg_BATvsMKL}
		\end{subfigure}
		\vspace{-5pt}
		\caption{Journeys (Multiple Linear Regression)}\label{fig:MulReg}\vspace{-10pt}
	\end{figure}
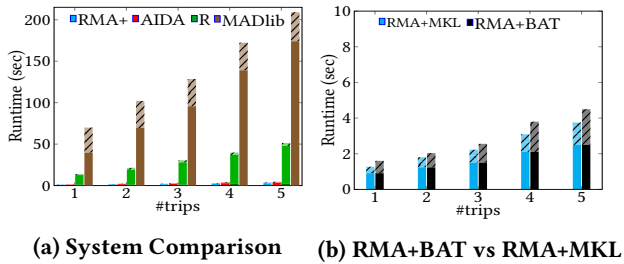

	\paragraph{(3) Conferences -- Covariance Computation}
	\label{sec:covariance}
	
	We compute the covariance between conferences with A++ rating to lower rated
	conferences based on the number of publications per author and conference. The
	data includes two tables: $ranking$ stores a rating (e.g., A++, A+, B) for each
	conference. $publication$ stores the number of publications per author and
	conference; the first attribute is the author, the other attributes are
	conference names (i.e., the result of SQL PIVOT over a count-aggregate by
	conference and author). The query computes the covariance matrix on
	$publication$ and joins the result with $ranking$ to select A++ conferences.
	
	We measure the runtime  for $publication$ tables of increasing sizes: (1)
	337363x266 (i.e., 337363 authors and 266 conferences), (2) 550085x519, (3)
	722891x744, and (4) 876559x882. The $ranking$ table stores 882 tuples. Note that
	the number of result rows of covariance is identical to the number of input
	columns, e.g., covariance of $publications$ with 266 columns  returns a relation
	(or matrix) of size 266x266.
	
	Figure \ref{fig:CovConf_systems} shows the runtime results for RMA+,
	R, and AIDA. \rev{MADlib runs for 77, 429, 1086, resp.\ 1814 seconds
		on the different relation sizes and, thus, is omitted from the
		figure.} In all systems, the covariance computation dominates the
	overall runtime with at least 90\%.  \rev{Since AIDA does not support
		covariance, we implement covariance via cross product~\cite{SamCov}
		in all algorithms except MADlib, which has a \texttt{cov()} function
		but does not support cross product. For the cross product in RMA+ we
		use \finrev{the routine \texttt{cblas\_dsyrk()} since the result of
			multiplication is symmetric}, in AIDA we use \texttt{a.t\,@\,a},
		in R we use \texttt{crossproduct} \footnote{\finrev{We do not use
				cov() function since it uses a single core only and is
				slower.}}. }
	
	\rev{Note that the covariance computations in AIDA and R do not return
		contextual information. In order to join the result with $ranking$
		and to select all A++ conferences, the conference names must be
		manually added as a new column.}
	
	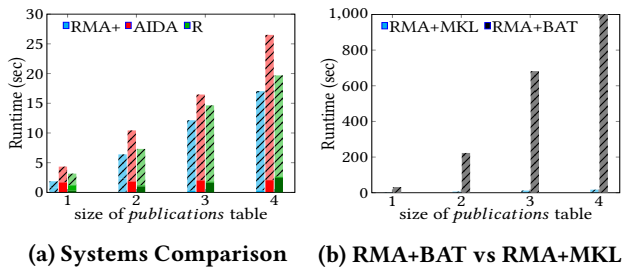
\begin{figure}[htbp] \centering	
		\begin{subfigure}{.48\linewidth}
			\begin{tikzpicture}[
			xscale=0.35,
			yscale=0.3,
			every axis/.style={
				width=11cm,
				ybar stacked,
				ymin=0,ymax=30,
				symbolic x coords={1,2,3,4},
				bar width=9pt,
				xtick={1,2,3,4}, 
				label style={font=\Huge},
				tick label style={font=\Huge} 
			},
			]
			
			\begin{axis}
			[bar shift=-15pt,
			hide axis,
			legend style={font=\Huge},
			legend style={at={(0.32,0.98)}},
			legend style={draw=none}]
			\addplot+[draw=none,fill=cyan]
			table [x=C,y=RMA_join]{BS_cov_RMA.dat};
			\addlegendentry{RMA+};
			\addplot+[draw=none,fill=cyan!50!white,
			postaction={pattern=north east lines}]
			table [x=C,y=RMA_lapack]{BS_cov_RMA.dat};
			\addplot+[draw=none,fill=cyan!50!white,
			postaction={pattern=north east lines}]
			table [x=C,y=MKL_cov_dsyrk]{BS_cov_RMA.dat};
			\end{axis}
			
			\begin{axis}[	
			bar shift=-5pt,
			hide axis,
			legend style={font=\Huge},
			legend style={at={(0.55,0.98)}},
			legend style={draw=none}]
			\addplot+[draw=none,fill=red]
			table [x=C,y=AIDA_join]{BS_cov_AIDA.dat};
			\addlegendentry{AIDA};
			\addplot+[draw=none,fill=red!50!white,
			postaction={pattern=north east lines}]
			table [x=C,y=AIDA_cov]{BS_cov_AIDA.dat};
			\end{axis}

			\begin{axis}[bar shift=5pt,
			legend style={font=\Huge},
			legend style={at={(0.65,0.98)}},
			legend style={draw=none},
			xlabel = size of $publications$ table,
			ylabel = Runtime (sec),
			label style = {font=\Huge},
			y label style = {at={(axis description cs:.01,.5)}}]
			\addplot+[forget plot,draw=none,fill=black!60!green]
			table [x=C,y=R_load]{BS_cov_R.dat};
			\addplot+[draw=none,fill=black!30!green]
			table [x=C,y=R_join]{BS_cov_R.dat};
			\addlegendentry{R};
			\addplot+[draw=none,fill=black!30!green!50!white,
			postaction={pattern=north east lines}]
			table [x=C,y=R_cov]{BS_cov_R.dat};
			\end{axis}
			
			\end{tikzpicture}
			\caption{Systems Comparison}
			\label{fig:CovConf_systems}
		\end{subfigure}
		\begin{subfigure}{.48\linewidth}
			\begin{tikzpicture}[
			xscale=0.35,
			yscale=0.3,
			every axis/.style={
				width=11cm,
				ybar stacked,
				ymin=0,ymax=1000,
				symbolic x coords={1,2,3,4},
				bar width=9pt,
				xtick={1,2,3,4},  
				label style={font=\Huge},
				tick label style={font=\Huge}
			},
			]
			
			\begin{axis}
			[bar shift=-5pt,
			hide axis,
			legend style={font=\Huge},
			legend style={at={(0.45,0.98)}},
			legend style={draw=none}]
			\addplot+[draw=none,fill=cyan]
			table [x=C,y=RMA_join]{BS_cov_RMA.dat};
			\addlegendentry{RMA+MKL};
			\addplot+[draw=none,fill=cyan!50!white,
			postaction={pattern=north east lines}]
			table [x=C,y=RMA_lapack]{BS_cov_RMA.dat};
			\addplot+[draw=none,fill=cyan!50!white,
			postaction={pattern=north east lines}]
			table [x=C,y=MKL_cov_dsyrk]{BS_cov_RMA.dat};
			\end{axis}

			\begin{axis}[bar shift=5pt,
			legend style={font=\Huge},
			legend style={at={(0.85,0.98)}},
			legend style={draw=none},
			xlabel = size of $publications$ table,
			ylabel = Runtime (sec),
			label style = {font=\Huge},
			y label style = {at={(axis description cs:-.04,.5)}}]
			\addplot+[draw=none,fill=black]
			table [x=C,y=RMA_join]{BS_cov_RMA.dat};
			\addlegendentry{RMA+BAT};
			\addplot+[draw=none,fill=black!50!white,
			postaction={pattern=north east lines}]
			table [x=C,y=RMA_cov]{BS_cov_RMA.dat};
			\end{axis}
			
			\end{tikzpicture}
			\caption{RMA+BAT vs RMA+MKL}
			\label{fig:CovConf_BATvsMKL}
		\end{subfigure}
		\caption{Conferences (Covariance Computation)}\label{fig:CovConf}\vspace{-15pt}
	\end{figure}
	
	\paragraph{(4) Trip Count}
	\label{sec:tripcount}
	
	\rev{In Figure \ref{fig:ADD} we compute the number of trips per rider
		to 10 different destinations. Each tuple in the input relations
		stores a rider and the number of trips to each of the 10 locations
		for one year. We use $\add$ on the relations of two different years
		to get the trip count for a period of two years. We vary the number
		of riders from 1M to 15M and measure the runtime.  Since $\add$ is a
		simple operation, RMA+ uses the no-copy implementation on BATs
		(RMA+BAT). RMA+ is faster than AIDA and R because it does not
		transfer data to Python (as AIDA) and does not translate data.tables
		to matrices (as R). MADlib takes 23, 119, 299, resp.\ 480 seconds
		for the different input sizes and, thus, is again omitted from the
		figure.}

	\begin{figure}[htbp] \centering	
		\begin{subfigure}{.48\linewidth}
			\begin{tikzpicture}[
			xscale=0.35,
			yscale=0.3,
			every axis/.style={
				width=11cm,
				ybar stacked,
				ymin=0,ymax=7,
				symbolic x coords={1, 5, 10, 15},
				bar width=9pt,
				xtick={1, 5, 10, 15},  
				label style={font=\Huge},
				tick label style={font=\Huge}
			},
			]
			
			\begin{axis}
			[bar shift=-15pt,
			hide axis,
			legend style={font=\Huge},
			legend style={at={(0.32,0.98)}},
			legend style={draw=none}]
			\addplot+[draw=none,fill=black!50!white,
			postaction={pattern=north east lines}]
			table [x=C,y=RMA+BAT]{BS_add_ALLSYS.dat};
			\addlegendentry{RMA+};;
			\end{axis}
			
			\begin{axis}[	
			bar shift=-5pt,
			hide axis,
			legend style={font=\Huge},
			legend style={at={(0.55,0.98)}},
			legend style={draw=none}]
			\addplot+[draw=none,fill=red!50!white,
			postaction={pattern=north east lines}]
			table [x=C,y=AIDA]{BS_add_ALLSYS.dat};
			\addlegendentry{AIDA};
			\end{axis}

			\begin{axis}[bar shift=5pt,
			legend style={font=\Huge},
			legend style={at={(0.65,0.98)}},
			legend style={draw=none},
			xlabel = \#tuples(M),
			ylabel = Runtime (sec),
			label style = {font=\Huge},
			y label style = {at={(axis description cs:.02,.5)}}]
			\addplot+[forget plot,draw=none,fill=black!60!green]
			table [x=C,y=R_load]{BS_add_ALLSYS.dat};
			\addplot+[draw=none,fill=black!30!green!50!white,
			postaction={pattern=north east lines}]
			table [x=C,y=R]{BS_add_ALLSYS.dat};
			\addlegendentry{R};
			\end{axis}
			
			\end{tikzpicture}
			\caption{Systems Comparison}
			\label{fig:ADD_systems}
		\end{subfigure}
		\begin{subfigure}{.48\linewidth}
			\begin{tikzpicture}[
			xscale=0.35,
			yscale=0.3,
			every axis/.style={
				width=11cm,
				ybar stacked,
				ymin=0,ymax=7,
				symbolic x coords={1, 5, 10, 15},
				bar width=9pt,
				xtick={1, 5, 10, 15},  
				label style={font=\Huge},
				tick label style={font=\Huge}
			},
			]
			
			\begin{axis}
			[bar shift=-5pt,
			hide axis,
			legend style={font=\Huge},
			legend style={at={(0.45,0.98)}},
			legend style={draw=none}]
			\addplot+[draw=none,fill=cyan!50!white,
			postaction={pattern=north east lines}]
			table [x=C,y=RMA+MKL]{BS_add_ALLSYS.dat};
			\addlegendentry{RMA+MKL};
			\end{axis}
			
			\begin{axis}[bar shift=5pt,
			legend style={font=\Huge},
			legend style={at={(0.85,0.98)}},
			legend style={draw=none},
			xlabel = \#tuples(M),
			ylabel = Runtime (sec),
			label style = {font=\Huge},
			y label style = {at={(axis description cs:.02,.5)}}]
			\addplot+[draw=none,fill=black!50!white,
			postaction={pattern=north east lines}]
			table [x=C,y=RMA+BAT]{BS_add_ALLSYS.dat};
			\addlegendentry{RMA+BAT};
			\end{axis}
			
			\end{tikzpicture}
			\caption{RMA+BAT vs RMA+MKL}
			\label{fig:ADD_BATvsMKL}
		\end{subfigure}
		\vspace{-5pt}
		\caption{Trip Count (Matrix Addition)}\label{fig:ADD}\vspace{-15pt}
	\end{figure}
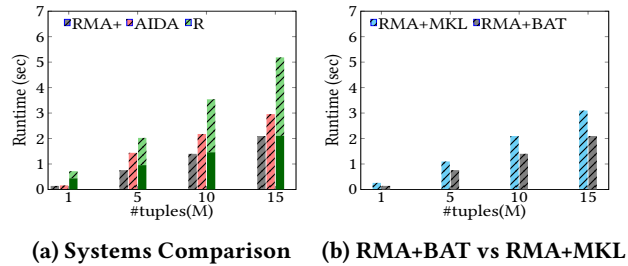
	
	\rev{
		\paragraph{RMA+BAT vs. RMA+MKL} Following our policy, RMA+ delegates
		matrix operations to MKL (RMA+MKL) in
		Figures~\ref{fig:LinReg_systems}, \ref{fig:MulReg_systems}, and
		\ref{fig:CovConf_systems} (the operations are complex and we do not
		run out of memory), and uses the no-copy implementation on BATs
		(RMA+BAT) in Figure~\ref{fig:ADD_systems} ($\add$ is a linear
		operation).  We compare RMA+BAT to RMA+MKL in all scenarios. RMA+MKL
		outperforms RMA+BAT for the queries on trips (factor 1.8-3.8, cf.\
		Figure~\ref{fig:LinReg_BATvsMKL}) and journeys (factor 1.4-1.9, cf.\
		Figure~\ref{fig:MulReg_BATvsMKL}). For the conference query, RMA+MKL
		is 24 to 70 times faster since the cross product requires single
		element access and operates on relations with a large number of
		attributes. For the trip count, RMA+BAT outperforms RMA+MKL in all
		settings (cf.\ Figure~\ref{fig:ADD_BATvsMKL}). Although elementwise
		addition is highly efficient in MKL, the transformation overhead
		cannot be amortized.
	}
	
	\subsection{Discussion}
	
	The key learnings from our empirical evaluation are the following: (1)
	RMA+ excels for mixed workloads that include both standard relational
	and matrix operations. (2) Only RMA+ can avoid data transformations in
	mixed workloads; data transformations may be costly and consume more
	than 90\% of the overall runtime. (3) For complex matrix operations,
	however, transforming the data to a suitable format may pay off: In
	our approach, we are free to transform the data whenever
	beneficial. (4) In terms of scalability to large relations/matrices,
	our solution outperforms all competitors since it relies on the memory
	management of the database system for both the standard relational and
	the matrix operations. (5) Finally, the handling of contextual
	information, a feature of RMA, is efficient and can leverage
	optimizations that avoid expensive sortings.
	
	\section{Conclusion}
	\label{sec:Summary}
	
	\finrev{In this paper, we targeted applications that store data in
		relations and must deal with queries that mix relational and linear
		algebra operations.} We proposed the \emph{relational matrix algebra} (RMA),
	an extension of the relational model with matrix operations that
	maintain important contextual information.  RMA operations are defined
	over relations and can be nested.  We implemented RMA over the
	internal data structures of \mbox{MonetDB} and do not require changes
	in the query processing pipeline. Our integration is competitive with
	state-of-the-art approaches and excels for mixed workloads.
	
	RMA opens new opportunities for cross algebra optimizations that
	involve both relational and linear algebra operations.  It also is
	interesting to investigate the handling of wide tables, e.g., by
	storing them as skinny tables that are accessed accordingly or by
	combining operations to avoid the generation of wide intermediate
	tables.
	
	\section*{Acknowledgments} 
	
	\finrev{We thank Joseph Vinish D'silva for providing the source code
		of AIDA and helping out with the system.  We thank the MonetDB team
		for their support.  We thank Roland Kwitt for the discussion about
		application scenarios.  The project was partially supported by the
		Swiss National Science Foundation (SNSF) through project number
		407550\_167177.}
	
	\bibliographystyle{plain}
	\bibliography{paperbib}

\end{document}